\documentclass[10pt]{article}

\usepackage[margin=1in]{geometry}
\usepackage{setspace}
\usepackage{graphicx}
\usepackage{booktabs}
\usepackage{adjustbox}
\usepackage{caption} %

\usepackage[hyphens]{url}
\usepackage{hyperref}
\usepackage[round]{natbib}

\usepackage{amsmath}
\usepackage{amsfonts}
\usepackage{amssymb}
\usepackage{amsthm}
\usepackage{dsfont}
\newcommand{\ind}{\mathds{1}}

\usepackage{tikz}
\usetikzlibrary{quantikz2}

\usepackage{xcolor}

\newtheorem{theorem}{Theorem}
\newtheorem{proposition}{Proposition}
\newtheorem{lemma}{Lemma}
\newtheorem{corollary}{Corollary}
\theoremstyle{remark}

\newcommand{\UQ}{\mathbb{U}_{\mathrm{Q}}}
\newcommand{\Uref}{U_{\mathrm{ref}}}
\newcommand{\cN}{\mathcal{N}}
\newcommand{\cF}{\mathcal{F}}
\newcommand{\cM}{\mathcal{M}}
\newcommand{\cD}{\mathcal{D}}
\newcommand{\cK}{\mathcal{K}}
\newcommand{\cI}{\mathcal{I}}
\newcommand{\cH}{\mathcal{H}}
\newcommand{\cE}{\mathcal{E}}
\newcommand{\Rad}{\mathfrak{R}}
\newcommand{\bbR}{\mathbb{R}}
\newcommand{\bbC}{\mathbb{C}}
\newcommand{\snorm}[1]{\left\lVert #1 \right\rVert_\sigma}
\newcommand{\Otil}{\widetilde{O}}

\newcommand{\tightertab}{\footnotesize\setlength{\tabcolsep}{3pt}}

\newcommand{\revtable}[1]{#1}
\newcommand{\revcolor}{}

\title{Paper Title Goes Here}

\begin{document}

\begin{center}

{\Large\bfseries
Margin-Based Generalization for Quantum Machine Learning with Classical Data Encoding
\par}

\vspace{0.9em}

{\large
Ju-Young Ryu\textsuperscript{1,*}
\quad
Tak Hur\textsuperscript{2,3,*}
\quad
Daniel K. Park\textsuperscript{2,4,5}
\par}

\vspace{0.6em}

{\small
\textsuperscript{1}Korea Institute of Science and Technology Information, Republic of Korea \\
\textsuperscript{2}Department of Statistics and Data Science,
  Yonsei University, Republic of Korea\\
\textsuperscript{3} Department of Applied Mathematics, 
University of Waterloo, Canada\\
\textsuperscript{4}Department of Applied Statistics,
  Yonsei University, Republic of Korea\\
  \textsuperscript{5}Department of Quantum Information,
  Yonsei University, Republic of Korea\\
\par}

\vspace{0.4em}

\begingroup
\renewcommand{\thefootnote}{*}
\footnotetext{These authors contributed equally to this work.}
\endgroup

\vspace{0.6em}

{\small
\texttt{jyryu@kisti.re.kr}
\quad
\texttt{tak.hur@uwaterloo.ca}
\quad
\texttt{dkd.park@yonsei.ac.kr}
\par}

\vspace{1.2em}

\end{center}

\begin{abstract}
\noindent
Understanding generalization is essential for assessing the practical viability of quantum machine learning. In many applications, quantum models operate on classical data that must first be encoded into quantum states, making the encoding part of the effective hypothesis class. We develop a margin-based generalization framework for this setting, covering both fixed and trainable quantum data encodings. For fixed encodings, the existing margin bound transfers unchanged. For trainable encodings, we derive a bound using a two-scale covering argument that combines the original margin-dependent term with an additive encoding-capacity term of order \(\widetilde O(\sqrt{L/m})\), where \(L\) is the number of trainable encoding parameters and \(m\) is the training-sample size. Across 19,440 simulated training runs spanning various image datasets, circuit architectures, and optimization settings, the inverse margin performs best among five baseline and two bound-derived complexity measures under conditional-independence and 90th-percentile sign-error criteria, within bootstrap uncertainty. These results provide a unified margin-based view of generalization in quantum machine learning with classical data. 
\end{abstract}

\vspace{0.8em}

\section{Introduction}
\label{sec:intro}

Quantum neural networks (QNNs) use parameterized quantum circuits as trainable models for learning tasks. As in classical machine learning, their practical utility depends on \emph{generalization}: the ability to make accurate predictions on unseen data from a finite training sample. Generalization in QML has been studied through parameter counts \citep{caro2022generalization}, effective dimension \citep{abbas2021power}, and quantum information-theoretic quantities \citep{banchi2021generalization}.

Most of these analyses produce \emph{uniform} bounds that hold simultaneously for every hypothesis in the model class. Uniform bounds have faced sustained criticism in classical deep learning. \citet{zhang2021understanding} showed that deep networks easily fit random labels, so any bound that applies equally to all hypotheses in the class cannot distinguish models that generalize from models that memorize. The same failure occurs in QML. \citet{gilfuster2024understanding} demonstrated that even few-qubit QNNs can perfectly fit randomized labels on a benchmark quantum dataset, rendering uniform bounds equally uninformative in the quantum setting.

Classical learning theory responded to this critique with \emph{margin-based} analyses \citep{cortes1995support,bartlett2017spectrally,neyshabur2018pacbayesian}. These analyses replace the worst case over the hypothesis class with a data-dependent quantity that measures how confidently the trained model separates the training sample. Large-scale empirical studies subsequently confirmed that margin-based complexity measures are among the most reliable predictors of generalization in deep networks \citep{jiang2019fantastic,dziugaite2020search}. \citet{hur2025margins} recently extended this line of work to QML. They established a margin-based generalization bound for QNNs classifying \emph{quantum} data and showed empirically, on quantum phase recognition tasks, that margin-based metrics predict generalization better than parameter-based metrics.

Near-term applications of QML, however, predominantly involve \emph{classical} data, which must be loaded into a quantum computer by a data-encoding circuit. The encoding is known to shape the expressivity of the resulting model \citep{schuld2021effect,perezsalinas2020data} and its generalization behavior \citep{caro2021encoding}. When the encoding is \emph{fixed}, each classical input is mapped to a quantum state once and for all. The classical-data problem then reduces exactly to the quantum-data one, and the margin theory of \citet{hur2025margins} transfers unchanged (Proposition~\ref{prop:fixed}). Modern QNN architectures, however, often make the encoding itself \emph{trainable}, either by interleaving parameterized layers with data re-uploading blocks \citep{perezsalinas2020data,lloyd2020quantum} or by learning the embedding with a classical neural network \citep{hur2024neural}. In this case, the encoded sample matrix depends on learned parameters, so the fixed-matrix covering argument of \citet{hur2025margins} no longer applies directly.

This paper makes the following theoretical and empirical contributions.
\begin{itemize}
\item We show that for any fixed data encoding the quantum-data margin bound holds without modification (Proposition~\ref{prop:fixed}). We then prove a margin-based generalization bound for QNNs whose inputs are produced by a trainable encoding with $L$ parameters (Theorem~\ref{thm:main}). The bound separates into the same margin-normalized term $(bE/\gamma)\sqrt{n/m}$ as in the quantum-data case and an additive encoding-capacity term of order $\sqrt{L/m}$, up to logarithmic factors. The proof introduces a two-scale covering argument that covers the encoding parameter space and the QNN unitary class at separate scales. A corollary for QNNs whose trainable parameters all lie in the encoding circuit (Corollary~\ref{cor:b0}) shows that the margin then enters the bound only logarithmically.
\item Following the predictor-evaluation methodology of \citet{jiang2019fantastic} and \citet{dziugaite2020search}, we conduct a systematic empirical study of 19{,}440 QNN models trained on binary classification tasks derived from MNIST, KMNIST, and Fashion-MNIST. The study sweeps four embedding circuits, four variational ans\"atze, three re-uploading depths, three qubit counts, and a grid of optimization hyperparameters. Five baseline complexity measures are evaluated with a conditional independence test and a sign-error analysis. Margin-based measures consistently outperform parameter-count measures on both criteria.
\item Guided by the derived bounds, we test two theorem-derived complexity measures, the composite $\sqrt{n_q}/m_m + c\sqrt{n_p}$ of Theorem~\ref{thm:main} and $n_p\ln(n_p/m_m)$ of Corollary~\ref{cor:b0}. Neither improves on the inverse margin. With $c$ tuned in-sample, the composite lowers the sign-error only slightly, within the bootstrap uncertainty. Both measures score below the inverse margin in the conditional independence test. We attribute this to the parameter count being fixed by the architecture hyperparameters, on which the test conditions.
\end{itemize}

Together, these results extend the margin-based analysis of \citet{hur2025margins} to QNNs trained on classical data and quantify the additional capacity introduced by learning the encoding. 

\section{Related Work}
\label{sec:related}

\paragraph{Margins and generalization in classical learning.}
Margin analysis dates back to support vector machines \citep{cortes1995support} and underlies modern norm-based generalization theory for deep networks, notably the spectrally normalized margin bounds of \citet{bartlett2017spectrally} and their PAC-Bayesian counterpart \citep{neyshabur2018pacbayesian}. Beyond bounds, large-scale empirical evaluations established protocols for judging whether a complexity measure genuinely \emph{predicts} generalization rather than merely correlating with it. \citet{jiang2019fantastic} introduced a conditional independence criterion inspired by probabilistic causality, and \citet{dziugaite2020search} advocated distributional robustness of sign-errors across experimental environments. We adopt both protocols.

\paragraph{Generalization in QML.}
Uniform generalization bounds for QML have been derived from parameter counts \citep{caro2022generalization}, effective dimension \citep{abbas2021power}, and quantum information-theoretic arguments \citep{banchi2021generalization}. \citet{gilfuster2024understanding} showed that QNNs overfit randomized labels, questioning the practical value of uniform bounds, in analogy with \citet{zhang2021understanding}. \citet{hur2025margins} established a margin-based bound for QNNs with quantum input data by covering the image of the data matrix under the unitary class, and demonstrated empirically that margin-based metrics dominate parameter-based ones. More recently, \citet{rodriguez2026pac} derived solution-dependent PAC-Bayesian bounds for layered quantum models. Our theory instead extends the quantum-data margin bound to classical inputs with trainable encodings, and our experiments evaluate margin-based predictors in that setting.

\paragraph{Data encoding.}
For classical data, the choice of quantum embedding determines the model's frequency content \citep{schuld2021effect} and its generalization behavior. \citet{caro2021encoding} derived generalization bounds for \emph{fixed} encodings via fat-shattering dimensions of the induced function class. Data re-uploading \citep{perezsalinas2020data} interleaves encoding and trainable layers. Trainable embeddings \citep{lloyd2020quantum} and neural quantum embeddings \citep{hur2024neural,liu2025neural,kim2026multi,choi2025optimizing} optimize the encoding to separate the classes in Hilbert space, which enlarges margins \citep{hur2025margins}. Our results cover both regimes: a margin-based bound that carries over to fixed encodings (Proposition~\ref{prop:fixed}), and a bound that quantifies the capacity cost of training the encoding (Theorem~\ref{thm:main}).

\section{Preliminaries}
\label{sec:prelim}

\paragraph{Multiclass classification with QNNs.}
We consider $k$-class classification of $n$-qubit quantum states, writing $N = 2^n$. A QNN applies a unitary $U$ from an accessible class $\UQ \subseteq \mathbb{U}(N)$, followed by a positive operator-valued measurement (POVM) $\{E_i\}_{i=1}^k$. For a pure input state $x \in \bbC^N$, the model outputs the vector of class probabilities
\begin{equation*}
h(x) = \big( x^\dagger U^\dagger E_i U x \big)_{i=1}^k \in [0,1]^k ,
\end{equation*}
and predicts $\arg\max_i h(x)_i$. Given $m$ i.i.d.\ samples $S = \{(x_i, y_i)\}_{i=1}^m$ from an unknown distribution $\cD$, the true risk is $R(h) = \mathbb{E}_{(x,y)\sim\cD} \ind(\arg\max_i h(x)_i \neq y)$.

\paragraph{Margins.}
The \emph{margin operator} $\cM(v, y) = v_y - \max_{i \neq y} v_i$ measures the gap between the correct class probability and the strongest competitor. The sample margin of $(x,y)$ is $\cM(h(x), y)$. For a margin level $\gamma > 0$, the ramp loss $\ell_\gamma : \bbR \to [0,1]$ is defined by $\ell_\gamma(t) = \min\{1, \max\{0, 1 - t/\gamma\}\}$. The empirical margin loss is $\widehat R_\gamma(h) = m^{-1} \sum_{i=1}^m \ind\big( h(x_i)_{y_i} \le \gamma + \max_{j \neq y_i} h(x_i)_j \big)$. Writing $\cF_\gamma$ for the class of functions $(x,y) \mapsto \ell_\gamma(\cM(h(x), y))$ induced by the hypothesis class, the standard margin bound \citep{mohri2018foundations} states that, for any $0<\delta<1$, with probability at least $1 - \delta$, for all $h$,
\begin{equation}
\label{eq:margin-lemma}
R(h) \le \widehat R_\gamma(h) + 2 \Rad\big( (\cF_\gamma)_{|S} \big) + 3 \sqrt{\tfrac{\ln(2/\delta)}{2m}},
\end{equation}
where $\Rad$ denotes the empirical Rademacher complexity of the loss class restricted to the sample.

\paragraph{Margin bound for quantum data.}
Two constants of the QNN control the Rademacher complexity in \eqref{eq:margin-lemma}. The \emph{measurement constant} is $E = \big( \sum_{i=1}^k \snorm{E_i}^2 \big)^{1/2}$, where $\snorm{\cdot}$ is the spectral norm. The \emph{unitary distance bound} $b$ satisfies $\| U - \Uref \|_{2,1} \le b$ for all $U \in \UQ$ and some reference unitary $\Uref$, where $\|\cdot\|_{2,1}$ is the sum of the Euclidean norms of the columns. Covering the image $\UQ X = \{ UX : U \in \UQ \}$ of the sample data matrix $X \in \bbC^{N \times m}$ with matrix sparsification techniques \citep{zhang2002covering}, \citet{hur2025margins} obtained the following.

\begin{theorem}[\citealp{hur2025margins}]
\label{thm:quantum}
Consider an $n$-qubit QNN with unitary class $\UQ$ (distance bound $b$) and POVMs $\{E_i\}_{i=1}^k$ for $k$-class classification of quantum states. For any $\gamma > 0$ and $0 < \delta < 1$, with probability at least $1 - \delta$ over an i.i.d.\ sample of size $m$, every hypothesis $h$ satisfies
\begin{equation*}
R(h) \le \widehat R_\gamma(h) + \Otil\left( \frac{bE}{\gamma} \sqrt{\frac{n}{m}} + \sqrt{\frac{\ln(2/\delta)}{m}} \right).
\end{equation*}
\end{theorem}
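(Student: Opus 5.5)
The plan is to start from the margin inequality \eqref{eq:margin-lemma} and reduce everything to bounding the empirical Rademacher complexity $\Rad((\cF_\gamma)_{|S})$. I would bound it with Dudley's entropy integral, which requires covering numbers for $(\cF_\gamma)_{|S}$ in the empirical $\ell_2$ metric.

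First I would transfer covers from the QNN outputs to the loss class. The ramp loss $\ell_\gamma$ is $1/\gamma$-Lipschitz, and the margin operator $\cM(\cdot,y)$ is $2$-Lipschitz in $\ell_\infty$. It therefore suffices to cover the output matrix $(h(x_j)_i)_{i,j}$ at scale $\gamma\epsilon/2$. For a single sample $x$ and two unitaries $U,V$,
\begin{equation*}
|x^\dagger U^\dagger E_i U x - x^\dagger V^\dagger E_i V x| \le 2\snorm{E_i}\,\|(U-V)x\|_2 ,
\end{equation*}
since $\|Ux\|_2=\|Vx\|_2=1$. Summing the squares over $i$ gives the factor $E$. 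Hence an $\ell_2$ cover of the image $\UQ X$, taken column-wise and averaged over the $m$ samples, at scale of order $\gamma\epsilon/E$, induces the needed cover. Writing $UX = \Uref X + (U-\Uref)X$ reduces this to covering $\{(U-\Uref)X\}$. The column norms of $X$ are $1$, and $\|U-\Uref\|_{2,1}\le b$.

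Next I would apply the matrix sparsification covering lemma of \citet{zhang2002covering} (Maurey's empirical method). It gives a log-covering number for $\{AX : \|A^\top\|_{2,1}\le b\}$ of order $(b^2\|X\|^2/\epsilon'^2)\ln(2N^2)$ at scale $\epsilon'$. With $\ln N = n\ln 2$, and after normalizing by $\sqrt m$ for the empirical metric, this yields
\begin{equation*}
\ln \cN\big((\cF_\gamma)_{|S},\epsilon\big) \lesssim \frac{b^2E^2 n}{\gamma^2\epsilon^2}\,\ln(\cdot).
\end{equation*}
Plugging this into Dudley's integral, truncated at $\alpha=1/\sqrt m$, gives $\Rad \lesssim \frac{bE}{\gamma}\sqrt{n/m}\,\log m$. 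Combining with the confidence term of \eqref{eq:margin-lemma} gives the stated $\Otil$ bound.

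The main obstacle is the sparsification step. Zhang's lemma is stated for real matrices and for a $\|\cdot\|_{2,1}$ constraint on a particular side of the product. I would handle this by realifying the complex entries, which doubles the dimension and only changes the logarithmic factor. I would also check carefully that the column-sum-of-norms constraint on $U-\Uref$ matches the transpose convention in Maurey's argument, applying it row-wise to $X^\top(U-\Uref)^\top$ if needed.
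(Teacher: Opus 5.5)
Your overall route is the one the paper relies on: the $4E/\gamma$ Lipschitz transfer, a Maurey--Zhang covering of the image $\UQ X$, Dudley's integral, then \eqref{eq:margin-lemma}. Realifying, or truncating at $\alpha=1/\sqrt m$, only costs constants and a $\log m$. The step you flagged as the main obstacle, however, is a genuine gap, and transposing does not close it.

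Maurey's argument for $\{AX\}$ expands $AX=\sum_{i,j}A_{ij}\,e_iX_{j,:}$ over the fixed atoms $e_iX_{j,:}$. Their total weight is $\sum_{i,j}|A_{ij}|\,\|X_{j,:}\|_2\le\|X\|_2\sum_i\|A_{i,:}\|_2$, so the argument needs the sum of the \emph{row} norms of $A$. For $A=U-\Uref$ the hypothesis bounds only the sum of the \emph{column} norms. Rewriting the product as $X^\top(U-\Uref)^\top$ feeds the same row sum of $U-\Uref$ back into the lemma.

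For differences of unitaries the two sums can differ by a factor $\sqrt N=2^{n/2}$. Let $\Uref$ be the $n$-qubit Hadamard transform (all entries of modulus $1/\sqrt N$) and $U=\Uref\,\mathrm{diag}(-1,1,\dots,1)$. Then $U-\Uref=-2\,\Uref e_1e_1^\top$ has column norms summing to $2$ but row norms summing to $2\sqrt N$. Run as written, your argument therefore proves the bound only with $b$ replaced by a row-norm radius that can be $\sqrt N\,b$, i.e.\ with a first term of order $\frac{bE}{\gamma}\sqrt{Nn/m}$.

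No choice of convention rescues this without using unitarity. Take an arbitrary $A$ with column sum at most $b$ and $X=e_1\mathbf{1}^\top$. Then $\{AX\}$ is a Euclidean ball of radius $b\sqrt m$ in a copy of $\bbC^N$, whose log-covering number at radius $b\sqrt m/2$ is at least $2N\ln 2$, versus the $O(n)$ a Maurey-type bound would assert.

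The missing idea is to factor $\Uref$ out on the left instead of subtracting it: $UX=\Uref(X+MX)$ with $M=\Uref^\dagger U-I$. By unitary invariance of the Frobenius norm, an $r$-cover $\{Z\}$ of $\{MX:U\in\UQ\}$ gives the $r$-cover $\{\Uref(X+Z)\}$ of $\UQ X$. Because $\Uref^\dagger U$ is unitary, $M$ is normal: $MM^\dagger=M^\dagger M=2I-\Uref^\dagger U-U^\dagger\Uref$. Comparing diagonal entries shows that the $i$th row and the $i$th column of $M$ have the same Euclidean norm, and $\|Me_i\|_2=\|(U-\Uref)e_i\|_2$. Hence the row sum of $M$ equals $\|U-\Uref\|_{2,1}\le b$, which is exactly what Maurey needs. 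The $4N^2$ atoms $\pm e_iX_{j,:}/\|X_{j,:}\|_2$ and $\pm\mathrm{i}\,e_iX_{j,:}/\|X_{j,:}\|_2$, with total weight at most $\sqrt2\,b\sqrt m$, then give \eqref{eq:orbit-cover}.

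There are two smaller omissions. First, the ceiling in the Maurey bound cannot simply be dropped: at scales where its argument falls below one, it leaves an additive $\ln 4N^2$ without the factor $bE/\gamma$. It is harmless because $\snorm{U-\Uref}\le\|U-\Uref\|_{2,1}\le b$ places $\UQ X$ in a Frobenius ball of radius $b\sqrt m$ about $\Uref X$. So the log-covering number is zero at every such scale (compare Appendix~\ref{app:conditions}). Second, your Lipschitz estimate for $z\mapsto z^\dagger E_iz$ assumes unit-norm columns, which Maurey's cover points need not have; use internal cover points at twice the radius.
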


When the $k$ POVM elements are nonzero projectors, $E = \sqrt{k}$, so the measurement factor is common across models with the same number of classes. The remaining factors $b$ and $n$ must still be accounted for when comparing different unitary classes or circuit widths.

\section{Margin Bounds with Classical Data Encoding}
\label{sec:theory}

\paragraph{Fixed encodings.}
A fixed encoding transports the data distribution onto the space of quantum states, so the classical-data problem is an instance of the quantum-data one and the quantum-data bound transfers directly.

\begin{proposition}[Fixed encodings]
\label{prop:fixed}
For any fixed encoding $x \mapsto |\psi(x)\rangle$ of classical inputs into $n$-qubit states, the bound of Theorem~\ref{thm:quantum} holds for the hypothesis class of all $h(x) = \big( \psi(x)^\dagger U^\dagger E_i U \psi(x) \big)_{i=1}^k$ with $U \in \UQ$.
\end{proposition}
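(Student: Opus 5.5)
The plan is a pushforward (reduction) argument. Let $\cD$ be the distribution on pairs $(x,y)$ of classical inputs and labels. Since the encoding $\psi$ is fixed and does not depend on the sample, I define the pushforward distribution $\cD_\psi$ on pairs $(\psi(x), y)$, where $\psi(x) \in \bbC^N$ is a unit vector. The argument requires $\psi$ to be measurable, which I assume implicitly. An i.i.d.\ sample $S = \{(x_i,y_i)\}_{i=1}^m \sim \cD^m$ then induces an i.i.d.\ sample $S_\psi = \{(\psi(x_i), y_i)\}_{i=1}^m \sim \cD_\psi^m$.

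Each classical-data hypothesis $h(x) = (\psi(x)^\dagger U^\dagger E_i U \psi(x))_i$ factors as $h = g_U \circ \psi$, where $g_U(z) = (z^\dagger U^\dagger E_i U z)_i$ is exactly a quantum-data hypothesis of Theorem~\ref{thm:quantum}, with the same $\UQ$, the same $b$ and the same POVM. The index set is unchanged, so the class $\{h\}$ and the class $\{g_U\}$ are in bijection through $U$.

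Next I check that every quantity in the bound is preserved under this correspondence:
\begin{itemize}
\item The true risk satisfies $R(h) = \mathbb{E}_{\cD}\,\ind(\arg\max_i g_U(\psi(x))_i \neq y) = \mathbb{E}_{\cD_\psi}\,\ind(\arg\max_i g_U(z)_i \neq y) = R(g_U)$, by the change-of-variables formula for pushforward measures.
\item The empirical margin loss satisfies $\widehat R_\gamma(h)$ on $S$ equals $\widehat R_\gamma(g_U)$ on $S_\psi$, pointwise in the sample.
\item The constants $b$, $E$ and $n$ are identical in the two settings.
\end{itemize}

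Applying Theorem~\ref{thm:quantum} to $\cD_\psi$ gives an event of $\cD_\psi^m$-probability at least $1-\delta$ on which the bound holds for all $g_U$. The preimage of this event under the sample map $S \mapsto S_\psi$ has $\cD^m$-probability at least $1-\delta$, and on it the bound holds for all $h$.

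I expect the main obstacle to be bookkeeping rather than substance. First, Theorem~\ref{thm:quantum} must place no assumption on the input distribution beyond its being supported on pure states, and its covering argument must work with the realized data matrix $X = [\psi(x_1),\dots,\psi(x_m)]$. I would verify this by noting that the proof bounds $\Rad((\cF_\gamma)_{|S})$ through coverings of $\UQ X$ for an arbitrary matrix with unit-norm columns. Second, the encoding must not depend on $S$, since a data-dependent encoding is precisely the trainable case that the paper treats separately. If the bound is preferred to be read directly at the level of \eqref{eq:margin-lemma}, the same argument applies: the restricted loss class $(\cF_\gamma)_{|S}$ for $h$ equals the restricted loss class for $g_U$ on $S_\psi$ as a subset of $\bbR^m$, so the two empirical Rademacher complexities coincide.
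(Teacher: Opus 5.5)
Your proposal is correct and is essentially the paper's own proof. The paper likewise identifies the encoded pairs $(\psi(x_i),y_i)$ as i.i.d.\ samples from the pushforward of $\cD$ under $(x,y)\mapsto(\psi(x),y)$, and observes that risks, margins, and the hypothesis class then coincide with those of the quantum-data setting; you additionally spell out the change of variables, the samplewise equality of $\widehat R_\gamma$, and the transfer of the good event via its preimage. One small imprecision: ``bijection'' overstates it, since distinct $g_U$ can agree on the range of $\psi$, but your argument only uses that every $h$ equals $g_U\circ\psi$ for some $U\in\UQ$.
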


\begin{proof}
The encoded pairs $(\psi(x_i), y_i)$ are i.i.d.\ samples from the pushforward of $\cD$ under $(x,y) \mapsto (\psi(x), y)$, and under this identification the risks, the margins, and the hypothesis class are exactly those of the quantum-data setting.
\end{proof}

\paragraph{Trainable encodings.}
Modern encodings, however, are trained together with the classifier, and the reduction above then breaks down. The encoded data matrix is no longer fixed but moves with the encoding parameters. We model a trainable encoding as a parameterized circuit that prepares $|\psi(x;\phi)\rangle = V(x;\phi)\, |0\rangle^{\otimes n}$ with
\begin{equation}
\label{eq:encoding}
V(x;\phi) = \left( \prod_{j=1}^{L} e^{i H_j \phi_j}\, W_j(x) \right) W_0(x),
\end{equation}
where the $W_j(x)$ are fixed data-dependent unitaries (e.g., angle-embedding blocks), the $H_j$ are Hermitian generators, and $\phi = (\phi_1, \dots, \phi_L) \in [-\pi,\pi]^L$ contains the $L$ independent trainable encoding parameters. A circuit layer containing several trainable gates is decomposed into scalar-parameter factors of the form appearing in \eqref{eq:encoding}. The fixed unitary $W_0(x)$ allows the case $L=0$ to represent an arbitrary fixed encoding. Data re-uploading circuits \citep{perezsalinas2020data} with fixed input weights are obtained by repeating the same data-dependent block. Writing $g(z) = ( z^\dagger E_i z )_{i=1}^k$, define the hypothesis and induced loss classes
\begin{align}
\label{eq:class}
\mathcal{G}
&= \big\{ x \mapsto g(U\psi(x;\phi)) : U \in \UQ,\ \phi \in [-\pi,\pi]^L \big\}, \\
\cF_\gamma
&= \big\{ (x,y) \mapsto \ell_\gamma\big(\cM(h(x),y)\big) : h \in \mathcal{G} \big\}.
\end{align}

\begin{theorem}
\label{thm:main}
Consider an $n$-qubit QNN with unitary class $\UQ$ and distance bound $b>0$, POVMs $\{E_i\}_{i=1}^k$, and a trainable encoding of the form \eqref{eq:encoding} with $L$ parameters. For any $\gamma > 0$ and $0 < \delta < 1$, with probability at least $1 - \delta$ over an i.i.d.\ sample of size $m$, every hypothesis $h \in \mathcal{G}$ satisfies
\begin{equation*}
R(h) \le \widehat R_\gamma(h) + \Otil\left( \frac{bE}{\gamma} \sqrt{\frac{n}{m}} + \sqrt{\frac{L}{m}} + \sqrt{\frac{\ln(2/\delta)}{m}} \right).
\end{equation*}
\end{theorem}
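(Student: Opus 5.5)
We bound the empirical Rademacher complexity $\Rad((\cF_\gamma)_{|S})$ in \eqref{eq:margin-lemma} by Dudley's entropy integral. The covering of $\cF_\gamma$ is built in two stages: one over the encoding parameters $\phi$, and one over the unitaries for each fixed $\phi$. The total covering number is at most the product of the two, so its logarithm is the sum of a term inherited from Theorem~\ref{thm:quantum} and a term of order $L\ln(\cdot)$. Since the entropy integral is subadditive under $\sqrt{a+b}\le\sqrt a+\sqrt b$, the two contributions separate and yield the additive $\sqrt{L/m}$ term.

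\textbf{Step 1 (Lipschitz reduction).} The map $z\mapsto g(z)$ is Lipschitz on unit vectors. Indeed, $|z^\dagger E_i z - w^\dagger E_i w|\le 2\snorm{E_i}\|z-w\|_2$, so $\|g(z)-g(w)\|_2\le 2E\|z-w\|_2$. The margin operator is $2$-Lipschitz from $\ell_2$ into $\bbR$ (up to a constant), and $\ell_\gamma$ is $1/\gamma$-Lipschitz. Hence an $\epsilon$-cover in the empirical $L_2(S)$ metric of the set of encoded outputs $\{(U\psi(x_i;\phi))_{i=1}^m\}$, measured as $\|\cdot\|_F/\sqrt m$, induces a $c\,E\epsilon/\gamma$-cover of $(\cF_\gamma)_{|S}$.

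\textbf{Step 2 (covering the encoding).} Each factor $e^{iH_j\phi_j}$ is Lipschitz in $\phi_j$, with $\|e^{iH_j\phi}-e^{iH_j\phi'}\|_\sigma\le\|H_j\|_\sigma|\phi-\phi'|$. By a telescoping argument on the product \eqref{eq:encoding} using unitary invariance of the spectral norm, $\|V(x;\phi)-V(x;\phi')\|_\sigma\le\sum_j\|H_j\|_\sigma|\phi_j-\phi'_j|\le H_{\max}\|\phi-\phi'\|_1$ uniformly in $x$. Since $U$ is unitary, $\|U\psi(x;\phi)-U\psi(x;\phi')\|_2$ obeys the same bound. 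A grid on $[-\pi,\pi]^L$ with spacing $\eta/(LH_{\max})$ therefore gives an $\eta$-cover of the encodings, uniform over $U$ and the sample, with cardinality $(2\pi L H_{\max}/\eta)^L$.

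\textbf{Step 3 (two-scale combination).} For each grid point $\hat\phi$, the matrix $X_{\hat\phi}=[\psi(x_1;\hat\phi),\dots,\psi(x_m;\hat\phi)]$ is a fixed data matrix with unit columns. The covering bound from the proof of Theorem~\ref{thm:quantum} (matrix sparsification \citep{zhang2002covering}) therefore gives $\ln\cN(\UQ X_{\hat\phi},\epsilon')\le \Otil(b^2 n/\epsilon'^2)$, with the same constants for every $\hat\phi$. Taking the union over grid points and using the triangle inequality, we get $\ln\cN(\epsilon'+\eta)\le \Otil(b^2n/\epsilon'^2)+L\ln(2\pi LH_{\max}/\eta)$. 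We choose the two scales separately: $\epsilon'$ set as in the quantum-data proof, and $\eta$ set proportional to $\gamma\epsilon/E$, which only affects logarithms. Plugging into Dudley's integral, the first piece reproduces $(bE/\gamma)\sqrt{n/m}$ up to logs. The second piece gives $\int_0^1\sqrt{L\ln(C LE/(\gamma\alpha))/m}\,d\alpha=\Otil(\sqrt{L/m})$, where the margin and the constants $H_{\max}$ and $E$ appear only inside logarithms.

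\textbf{Main obstacle.} The delicate step is Step 3. We need the quantum-data covering bound to hold uniformly over the family $\{X_{\hat\phi}\}$, with constants independent of $\hat\phi$. This holds because that bound depends only on the column norms of the data matrix, which equal $1$, and on $b$. We must also split the scale so that the $1/\gamma$ factor does not multiply $\sqrt{L}$, which means putting the $\gamma$-dependence into the logarithm via the choice of $\eta$. Finally, the bound must hold for a fixed $\gamma$; uniformity over $\gamma$, if desired, would follow from a standard union bound over a dyadic grid of $\gamma$ at logarithmic cost.
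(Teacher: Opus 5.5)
Your proposal is correct and follows essentially the same route as the paper. The steps match one for one: the $(4E/\gamma)$-Lipschitz reduction; the two-scale cover, in which a $\phi$-grid is made uniform over $U$ by unitary invariance and combined with the orbit bound for $\UQ X_{\hat\phi}$ (uniform because every $X_{\hat\phi}$ has unit columns); and Dudley's integral split via $\sqrt{a+b}\le\sqrt a+\sqrt b$, so that $\gamma$ enters the encoder term only inside a logarithm. Your $\ell_1$ grid with $H_{\max}$, versus the paper's $\ell_2$ grid with $H=(\sum_j\snorm{H_j}^2)^{1/2}$, changes only the logarithm. The one point you gloss is the ceiling in the orbit bound, which the paper handles explicitly via the side conditions $bE\ge\gamma/4$ and $m\ge m_0$. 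Your uniform claim $\ln\cN(\UQ X_{\hat\phi},\epsilon')\le\Otil(b^2n/\epsilon'^2)$ silently relies on the fact that the single point $\Uref X_{\hat\phi}$ covers the orbit at (normalized) radius $b$.
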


The first term is the margin-normalized capacity of the post-encoding QNN unitary class, as in Theorem~\ref{thm:quantum}, while the second term accounts for the trainable encoder. Setting $L=0$ recovers the fixed-encoding case. The $\Otil$ notation suppresses logarithmic factors. Theorem~\ref{thm:full} in Appendix~A gives the explicit dependence.

Any variational classifier can be viewed as a trainable encoding followed by a fixed classifier, by absorbing all of its trainable gates into the encoding. This leads to the following corollary.

\begin{corollary}
\label{cor:b0}
In the setting of Theorem~\ref{thm:main} with $\UQ = \{I\}$ and $EH\sqrt{L} \ge 2\gamma$, with probability at least $1-\delta$ over an i.i.d.\ sample of size $m$, every hypothesis $h$ satisfies
\begin{equation*}
R(h) \le \widehat R_\gamma(h) + O\!\left( \sqrt{\frac{L}{m}\, \ln \frac{EH\sqrt{L}}{\gamma}} + \sqrt{\frac{\ln(2/\delta)}{m}} \right),
\end{equation*}
with absolute constants, where $H = (\sum_{j} \snorm{H_j}^2)^{1/2}$. If every trainable gate is a Pauli rotation and the measurement is projective, the encoding term is $O(\sqrt{(L/m)\ln(L/\gamma)})$.
\end{corollary}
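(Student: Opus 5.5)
We prove Corollary~\ref{cor:b0} directly from the margin lemma \eqref{eq:margin-lemma}, covering only the encoding parameter space. This is the specialization of the two-scale argument of Theorem~\ref{thm:main} to $\UQ=\{I\}$, where the unitary scale disappears. With $\UQ=\{I\}$ the hypothesis class is $\mathcal{G}=\{x\mapsto g(\psi(x;\phi)):\phi\in[-\pi,\pi]^L\}$. It therefore suffices to bound $\Rad((\cF_\gamma)_{|S})$ through a covering of $[-\pi,\pi]^L$ in $\ell_\infty$ (or $\ell_2$) and Dudley's entropy integral, or simply a one-scale Massart bound plus the discretization error.

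\textbf{Step 1 (Lipschitz dependence on $\phi$).} For fixed $x$, telescope the product \eqref{eq:encoding}. Replacing $\phi_j$ by $\phi_j'$ changes one factor by $\snorm{e^{iH_j\phi_j}-e^{iH_j\phi_j'}}\le \snorm{H_j}|\phi_j-\phi_j'|$, and all other factors are unitary. Hence
\[
\|\psi(x;\phi)-\psi(x;\phi')\|_2\le \sum_j \snorm{H_j}|\phi_j-\phi_j'|\le H\|\phi-\phi'\|_2
\]
by Cauchy--Schwarz. Next, $|z^\dagger E_i z - z'^\dagger E_i z'|\le 2\snorm{E_i}\|z-z'\|_2$ for unit vectors. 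So $\|g(z)-g(z')\|_2\le 2E\|z-z'\|_2$. The margin operator is $2$-Lipschitz from $\ell_2$ (indeed $\ell_\infty$) to $\bbR$, and $\ell_\gamma$ is $1/\gamma$-Lipschitz. Altogether, for every $(x,y)$ the map $\phi\mapsto \ell_\gamma(\cM(h_\phi(x),y))$ is $(4EH/\gamma)$-Lipschitz in $\|\cdot\|_2$, uniformly in the sample.

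\textbf{Step 2 (covering and Rademacher bound).} The cube $[-\pi,\pi]^L$ has $\ell_2$-diameter $2\pi\sqrt L$, so it admits an $\epsilon$-net of size at most $(3\pi\sqrt L/\epsilon)^L$. By Step 1 this yields an $\alpha$-cover of $(\cF_\gamma)_{|S}$ in the uniform (hence empirical $L_2$) metric with $\alpha=4EH\epsilon/\gamma$. The cover has log-size
\[
L\ln\bigl(12\pi EH\sqrt L/(\gamma\alpha)\bigr).
\]
Apply the standard discretization bound $\Rad\le \alpha+\sqrt{2\ln N(\alpha)/m}$, or Dudley's integral from $0$ to $1$; either suffices since the functions take values in $[0,1]$. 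Choose $\alpha=\sqrt{L/m}$, or simply $\alpha=1/\sqrt m$. This gives
\[
\Rad=O\Bigl(\sqrt{\tfrac{L}{m}\ln\tfrac{EH\sqrt L}{\gamma}}\Bigr)
\]
plus lower-order terms. The hypothesis $EH\sqrt L\ge 2\gamma$ guarantees that the logarithm is bounded below by a positive constant. This lets us absorb additive constants inside the log, and the $\ln(\sqrt{m/L})$ from the choice of $\alpha$, into the absolute constant.

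\textbf{Main obstacle.} The delicate point is obtaining \emph{absolute} constants without a $\ln m$ factor. Using the crude one-scale bound introduces $\ln(1/\alpha)=\tfrac12\ln m$. I would therefore use Dudley's entropy integral:
\[
\int_0^1 \sqrt{L\ln(c_0 EH\sqrt L/(\gamma t))/m}\,dt=O\Bigl(\sqrt{(L/m)\ln(EH\sqrt L/\gamma)}\Bigr),
\]
valid once $c_0EH\sqrt L/\gamma\ge e$, which is where $EH\sqrt L\ge2\gamma$ is used, adjusting constants. Plugging into \eqref{eq:margin-lemma} gives the stated bound.

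\textbf{Special case.} For Pauli rotations $e^{-i\phi P/2}$ we have $\snorm{H_j}=1/2$, so $H=\sqrt L/2$. For nonzero projective measurements $E=\sqrt k$, which is constant for fixed $k$. Then $EH\sqrt L/\gamma=\sqrt k\,L/(2\gamma)$, and the log becomes $O(\ln(L/\gamma))$.
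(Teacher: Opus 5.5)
Your Dudley-based argument is essentially the paper's proof. It uses the same Lipschitz transfer from the parameter cube to the loss class: telescoping for the encoder, then the constants $2E$, $2$ and $1/\gamma$. It then takes a cover of $[-\pi,\pi]^L$ of log-size $L\ln(c\,EH\sqrt L/(\gamma t))$ at normalized scale $t$ and runs Dudley's integral down to $0$. As in the paper, you split $\ln(K/t)\le \ln K+\ln(1/t)$ with $\int_0^1\sqrt{\ln(1/t)}\,dt=\sqrt\pi/2$, and use $EH\sqrt L\ge 2\gamma$ to keep the logarithm away from zero so that additive constants can be absorbed. Your volumetric count $(3\pi\sqrt L/\epsilon)^L$ is valid only for $\epsilon\le\pi\sqrt L$. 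On $t\in(0,1]$ this again follows from $EH\sqrt L\ge2\gamma$; the paper's form $L\ln(1+\pi\sqrt L/r)$ needs no such check.

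Two points need correcting.

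First, the end of Step~2 claims that the $\ln\sqrt{m/L}$ from the one-scale choice $\alpha=\sqrt{L/m}$ can be absorbed into an absolute constant. This is false, because that term grows with $m$. The one-scale route only yields a bound of order $\sqrt{(L/m)\ln(EH\sqrt{m}/\gamma)}$. The Dudley computation in your ``main obstacle'' paragraph must therefore be the proof, not an optional alternative.

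Second, in the Pauli/projective case $\ln(\sqrt k\,L/(2\gamma))=\ln(L/\gamma)+\ln(\sqrt k/2)$. For $k>4$ the positive constant $\ln(\sqrt k/2)$ can be absorbed only if $\ln(L/\gamma)$ is bounded below. The paper obtains this by noting one may take $\gamma<1$, since otherwise $\widehat R_\gamma(h)=1$ and the bound is trivial. Then $L/\gamma\ge 2$ for $L\ge 2$, and the resulting constant depends on $k$. For $k\le 4$ your one-line argument already suffices, because $\sqrt k/2\le 1$.
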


The complexity term is a parameter count and the margin enters only through $\widehat R_\gamma$ and the logarithm, as for a finite-dimensional Lipschitz class whose fat-shattering dimension at scale $\gamma$ grows like $L\ln(1/\gamma)$~\citep{anthony2009neural}, in contrast to the polynomial $1/\gamma$ of Theorem~\ref{thm:main}. 
It is a margin-loss analogue of the parameter-count bound of~\citet{caro2022generalization}. Explicit constants are given in Appendix~\ref{app:corproof}.

\paragraph{Proof sketch.}
The full proof is in Appendix~A, and we outline the argument here. Write $X(\phi) = [\psi(x_1;\phi), \dots, \psi(x_m;\phi)] \in \bbC^{N \times m}$ for the encoded data matrix, whose columns are normalized states, so $\|X(\phi)\|_2 = \sqrt m$ for \emph{every} $\phi$. (1)~Since $\ell_\gamma(\cM(g(\cdot), y))$ is $(4E/\gamma)$-Lipschitz uniformly in $y$ \citep{hur2025margins}, covering numbers of $(\cF_\gamma)_{|S}$ reduce to covering numbers of the set $\{ U X(\phi) : U \in \UQ, \phi \in [-\pi,\pi]^L \}$, which is no longer the image of a single data matrix. (2)~The key step is a \emph{two-scale covering} argument that restores the single-matrix structure.

\begin{lemma}[Two-scale covering]
\label{lem:twoscale}
For any radii $r_1, r_2 > 0$,
\begin{equation*}
\cN\big( \{ U X(\phi) \},\, r_1 + r_2,\, \|\cdot\|_2 \big) \le
\cN\big( \{ X(\phi) \},\, r_1,\, \|\cdot\|_2 \big)
\cdot \sup_{\phi'} \cN\big( \{ U X(\phi') : U \in \UQ \},\, r_2,\, \|\cdot\|_2 \big).
\end{equation*}
\end{lemma}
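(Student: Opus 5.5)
The plan is the standard product-cover argument. It rests on one fact: every $U\in\UQ$ is unitary, so left multiplication by $U$ is an isometry for the Frobenius norm $\|\cdot\|_2$.

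First, fix a minimal $r_1$-cover $\{X(\phi_1),\dots,X(\phi_{M_1})\}$ of the set $\{X(\phi):\phi\in[-\pi,\pi]^L\}$, with $M_1=\cN(\{X(\phi)\},r_1,\|\cdot\|_2)$. One detail needs care. If covering numbers are taken with centers allowed outside the set, I first replace each center by a point of the set within distance $r_1$. This costs at most a factor of two in the radius. The cleaner choice is to use the internal covering convention throughout, in which centers lie in the set, so that every center has the form $X(\phi_a)$ for some parameter $\phi_a$. I will adopt the internal convention, or equivalently note that the statement is read with centers that are encoded matrices. That convention is what lets the second factor be indexed by an actual parameter $\phi'$.

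Second, for each $a\le M_1$ take an $r_2$-cover $\mathcal{C}_a$ of $\{UX(\phi_a):U\in\UQ\}$ with $|\mathcal{C}_a|\le \sup_{\phi'}\cN(\{UX(\phi'):U\in\UQ\},r_2,\|\cdot\|_2)$. The candidate cover of $\{UX(\phi)\}$ is $\mathcal{C}=\bigcup_a\mathcal{C}_a$, whose cardinality is at most the claimed product.

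Third, I verify that $\mathcal{C}$ covers at scale $r_1+r_2$. Take any $U\in\UQ$ and any $\phi$. Pick $a$ with $\|X(\phi)-X(\phi_a)\|_2\le r_1$, and then pick $C\in\mathcal{C}_a$ with $\|UX(\phi_a)-C\|_2\le r_2$. The triangle inequality gives
\begin{equation*}
\|UX(\phi)-C\|_2\le \|U(X(\phi)-X(\phi_a))\|_2+\|UX(\phi_a)-C\|_2\le \|X(\phi)-X(\phi_a)\|_2+r_2\le r_1+r_2 ,
\end{equation*}
where the middle step uses unitary invariance of the Frobenius norm, $\|UA\|_2^2=\operatorname{tr}(A^\dagger U^\dagger UA)=\|A\|_2^2$.

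The only real subtlety is the center-membership convention in the first step. The rest is routine. If one insists on external covers, the fallback is to state the lemma with $r_1$ replaced by $2r_1$ on the right-hand side, which changes only constants in Theorem~\ref{thm:main}.
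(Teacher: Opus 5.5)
Your proposal is correct and matches the paper's proof. The paper builds the cover the same way, as the union over internal centers $X(\phi')$ of $r_2$-covers of $\UQ X(\phi')$, using the triangle inequality and unitary invariance of the Frobenius norm. It resolves the center-membership issue as you do: the first factor is read as an internal covering number, with the same remark that an internal cover costs at most a factor of two in the radius.
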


The proof uses unitary invariance of the Frobenius norm. If $X(\phi)$ is close to a cover point $X(\phi')$, then $U X(\phi)$ is equally close to $U X(\phi')$, and the image $\UQ X(\phi')$ can be covered by the original argument. (3)~The unitary factor is bounded exactly as in the quantum-data case, since every $X(\phi')$ has Frobenius norm $\sqrt m$. (4)~For the encoding factor, a telescoping argument shows the map $\phi \mapsto X(\phi)$ is $(\sqrt m\, H)$-Lipschitz, so an $r$-net of the parameter cube $[-\pi,\pi]^L$ (of log-cardinality at most $L \ln(1 + \pi \sqrt{L} / r)$) induces a cover of $\{X(\phi)\}$. (5)~Dudley's entropy integral over the combined covering bound, with the discretization scale chosen proportional to $(bE/\gamma) \sqrt{\ln 4N^2}$, yields the stated complexity. Combining with \eqref{eq:margin-lemma} completes the proof.

The product form \eqref{eq:encoding} is sufficient but not necessary: the proof only requires the encoder family to be uniformly Lipschitz in its trainable parameters. Fully classical neural encoders require a separate covering bound and are not covered by Theorem~\ref{thm:main}. We next evaluate margin- and parameter-based statistics across QNNs with varying encoder capacity.

\section{Experimental Setup}
\label{sec:setup}

We follow the predictor-evaluation methodology developed for classical deep networks \citep{jiang2019fantastic,dziugaite2020search}. We train a large family of models spanning many hyperparameter axes, compute candidate complexity measures for each trained model, and score each measure by how well it predicts the generalization gap across controlled hyperparameter changes. Each hyperparameter is denoted in a bold font in the following subsections. 

\subsection{Datasets and Preprocessing}
We construct three binary image-classification tasks: MNIST \citep{lecun2010mnist} classes 1 vs.\ 6, KMNIST \citep{clanuwat2018deep} classes 0 vs.\ 5, and Fashion-MNIST \citep{xiao2017fashion} classes 4 vs.\ 9. The \textbf{dataset} itself is treated as a hyperparameter in the evaluation below. For each class, 5{,}000 training and 500 test samples are drawn at random. Images are standardized, reduced to two dimensions by principal component analysis, and standardized again so that the two features can be used directly as rotation angles of quantum gates. The two-dimensional feature vector is tiled to match the register width. For example, $[x_0, x_1]$ becomes $[x_0, x_1, x_0, x_1, x_0, x_1]$ on six qubits.

\subsection{Model Family}
Each QNN is specified by four architectural hyperparameters.

\paragraph{Embedding circuit $U_E(x)$.}
Four fixed embeddings are used: $R_X$ angle embedding, $R_Y$ angle embedding, and IQP embeddings with one or two layers \citep{havlicek2019supervised} (see Figure~\ref{fig:embedding}).

\paragraph{Variational circuit $V(\phi)$.}
Four ans\"atze are used: basic entangler layers with $R_X$ or $R_Y$ rotations, strongly entangling layers \citep{schuld2020circuit}, and simplified two-design layers \citep{cerezo2021cost} (see Figure~\ref{fig:vqc}).

\paragraph{Re-uploading depth and circuit width.}
We evaluate re-uploading depths $R \in \{4,6,8\}$ and circuit widths $n_q \in \{4,6,8\}$. The full circuit alternates embedding and variational blocks. The model prepares $\big( \prod_{i=0}^{R-1} V(\phi_i)\, U_E(x) \big) |0\rangle^{\otimes n}$ and outputs the measurement distribution of qubit~0. If the outermost variational block is viewed as the post-encoding QNN unitary, the preceding trainable rotations play the role of the $L$ encoding parameters in \eqref{eq:encoding}. 
The binary measurement on qubit~0 is projective, so $E=\sqrt{2}$ for every model.

\begin{figure*}[t]
    \centering
    \sbox0{
    \begin{adjustbox}{scale = 0.7}
        \begin{quantikz}
            \lstick{$|0\rangle$} & \gate{R_X(x_0)} & \\
            \lstick{$|0\rangle$} & \gate{R_X(x_1)} & \\
            \lstick{$|0\rangle$} & \gate{R_X(x_2)} & \\
            \lstick{$|0\rangle$} & \gate{R_X(x_3)} &
        \end{quantikz}
    \end{adjustbox}
    }
    \sbox1{
    \begin{adjustbox}{scale = 0.7}
        \begin{quantikz}
            \lstick{$|0\rangle$} & \gate{R_Y(x_0)} & \\
            \lstick{$|0\rangle$} & \gate{R_Y(x_1)} & \\
            \lstick{$|0\rangle$} & \gate{R_Y(x_2)} & \\
            \lstick{$|0\rangle$} & \gate{R_Y(x_3)} &
        \end{quantikz}
    \end{adjustbox}
    }
    \sbox2{
    \begin{adjustbox}{scale = 0.7}
        \begin{quantikz}
            \lstick{$|0\rangle$} & \gate{H} & \gate{R_Z(x_0)} & \gate[2]{R_{ZZ}(x_0x_1)} & & \\
            \lstick{$|0\rangle$} & \gate{H} & \gate{R_Z(x_1)} & & \gate[2]{R_{ZZ}(x_1x_2)} & \\
            \lstick{$|0\rangle$} & \gate{H} & \gate{R_Z(x_2)} & \gate[2]{R_{ZZ}(x_2x_3)} & & \\
            \lstick{$|0\rangle$} & \gate{H} & \gate{R_Z(x_3)} & & &
        \end{quantikz}
    \end{adjustbox}
    }
    \begin{tabular}{ccc}
         \usebox0 & \usebox1 & \usebox2\\
         (a) & (b) & (c)
    \end{tabular}
    \caption{Four-qubit embedding circuits with input $\vec{x} = [x_0, x_1, x_2, x_3]$. (a) $R_X$ angle embedding. (b) $R_Y$ angle embedding. (c) One IQP embedding layer. One- and two-layer variants are used.}
    \label{fig:embedding}
\end{figure*}

\begin{figure*}[t]
    \centering
    \sbox0{
    \begin{adjustbox}{scale = 0.7}
        \begin{quantikz}
            & \gate{R_X(\theta_0)} & \ctrl{1} & & & \targ{} & \\
            & \gate{R_X(\theta_1)} & \targ{} & \ctrl{1} & & & \\
            & \gate{R_X(\theta_2)} & & \targ{} & \ctrl{1} & & \\
            & \gate{R_X(\theta_3)} & & & \targ{} & \ctrl{-3} &
        \end{quantikz}
    \end{adjustbox}
    }
    \sbox1{
    \begin{adjustbox}{scale = 0.7}
        \begin{quantikz}
            & \gate{R_Y(\theta_0)} & \ctrl{1} & \gate{R_Y(\theta_4)} & & & \\
            & \gate{R_Y(\theta_1)} & \ctrl{0} & \gate{R_Y(\theta_5)} & \ctrl{1} & \gate{R_Y(\theta_8)} & \\
            & \gate{R_Y(\theta_2)} & \ctrl{1} & \gate{R_Y(\theta_6)} & \ctrl{0} & \gate{R_Y(\theta_9)} & \\
            & \gate{R_Y(\theta_3)} & \ctrl{0} & \gate{R_Y(\theta_7)} & & &
        \end{quantikz}
    \end{adjustbox}
    }
    \begin{tabular}{cc}
         \usebox0 & \usebox1 \\
         (a) & (b)
    \end{tabular}
    \caption{Four-qubit variational circuits with trainable parameters $\vec{\theta} = [\theta_0, \theta_1, \dots]$. (a) Basic entangler layer with $R_X$ rotations. The $R_Y$ variant substitutes $R_Y$ gates, and strongly entangling layers substitute general $U3$ rotations in place of $R_X$ rotations. (b) Simplified two-design layer.}
    \label{fig:vqc}
\end{figure*}

\subsection{Training}
Three optimization hyperparameters are varied, namely the \textbf{batch size} $\in \{32, 64, 128\}$, the \textbf{learning rate} $\in \{0.001, 0.002, 0.005\}$, and the \textbf{random seed} $\in \{16, 13, 42, 65, 89\}$. Models are trained using Adam \citep{kingma2015adam} until the following convergence criterion is met. Let $l_0$ and $\sigma_0$ denote the mean and standard deviation of the most recent 500 mini-batch losses, respectively, and let $l_1$ denote the mean of the preceding 500 mini-batch losses. Training stops when $|l_0 - l_1| < \sigma_0 / 1000$. In total, the grid comprises $4 \times 4 \times 3 \times 3$ architectures, $3 \times 3$ optimization settings, 3 datasets, and 5 seeds, i.e., $19{,}440$ training runs.

\subsection{Complexity Measures}
\label{sec:measures}
We compare five baseline complexity measures and two theorem-derived ones, all oriented so that \emph{larger} values should indicate a \emph{larger} generalization gap. Of the baselines, three contain margin information ($1/m_m$, $\sqrt{n_q}/m_m$, and $n_p/m_m$), while two are parameter-based ($n_p$ and $n_p \sqrt{n_q}$). Here $m_m$ is the median training margin, $n_q$ is the number of qubits, and $n_p$ is the total number of trainable parameters in the model. The factor $\sqrt{n_q}/m_m$ mirrors the explicit $\sqrt n/\gamma$ dependence in Theorem~\ref{thm:main}. The baselines isolate the ingredients of the bounds (margin, circuit width, and the parameter count that controls the encoding-capacity term), combining them only through simple products, so that comparisons among them attribute predictive power to the margin or to the parameter count.
The median is used rather than the lower quartile to avoid infinite inverse margins for inexpressive models \citep{hur2025margins}. Results with lower-quantile margins are given in Appendix~\ref{app:quantiles}.

The two theorem-derived measures combine the same ingredients in the functional forms that the bounds prescribe, and Section~\ref{sec:composite} compares them with the best baseline. The composite measure $\sqrt{n_q}/m_m + c \sqrt{n_p}$ matches the explicit dependence on $\gamma, n_q$ and $n_p$ in Theorem~\ref{thm:main}, with the margin and the encoding capacity as separate additive terms. The ratio $c$ stands in for the unknown unitary radius and is chosen to maximize the generalization property prediction accuracy of the measure. The measure $n_p\ln(n_p/m_m)$ follows the Pauli-rotation form of Corollary~\ref{cor:b0} with $L = n_p$ and $\gamma$ at the margin scale $m_m$, so that the margin enters only through a logarithm.

\subsection{Evaluation Criteria}
\label{sec:criteria}

\paragraph{Conditional independence test.}
The conditional independence test of \citet{jiang2019fantastic} measures the effect of a complexity measure $\mu$ on the generalization gap $g$ while controlling for confounding by the hyperparameters. For a pair of hyperparameter settings $(\theta_1, \theta_2)$, define the sign random variables $V_\mu = \mathrm{sign}(\mu(\theta_1) - \mu(\theta_2))$ and $V_g = \mathrm{sign}(g(\theta_1) - g(\theta_2))$, and let $U_S$ be uniformly distributed over the values of a subset $S$ of hyperparameter types. The criterion is the worst-case normalized conditional mutual information over small conditioning sets,
\begin{equation}
\label{eq:cit}
\cK(\mu) = \min_{U_S,\, |S| \le 2} \hat{\cI}(V_\mu, V_g \mid U_S),
\ 
\hat{\cI} = \frac{\cI(V_\mu, V_g \mid U_S)}{\cH(V_g \mid U_S)},
\end{equation}
where $\cI(\cdot,\cdot\mid\cdot)$ and $\cH(\cdot\mid\cdot)$ denote conditional mutual information and conditional entropy, estimated from the empirical distribution over runs. Larger $\cK(\mu)$ means $\mu$ carries more information about the generalization gap even after conditioning on the most adversarial pair of hyperparameters.

\paragraph{Sign-error distribution.}
Following \citet{dziugaite2020search}, the sign-error quantifies how often the ordering of two models by $\mu$ disagrees with their ordering by generalization gap, across pairs of settings $(\theta_1, \theta_2) \in \cE$ that differ in exactly one hyperparameter (random seeds excluded):
\begin{equation}
\label{eq:se}
\widehat{SE} = \tfrac{1}{2}\, \ind(n_{\mathrm{eff}} \ge 12) \sum_{(\theta_1, \theta_2) \in \cE} \bar\kappa(\theta_1, \theta_2)
\times \big[ 1 - \mathrm{sign}\big(g(\theta_1) - g(\theta_2)\big)\, \mathrm{sign}\big(\mu(\theta_1) - \mu(\theta_2)\big) \big].
\end{equation}
Each pair is weighted by $\bar\kappa(\theta_1, \theta_2)$, the normalization of $\kappa(\theta_1, \theta_2) = \max\big(0, \chi\big( (\hat g(\theta_1) - \hat g(\theta_2))/2, m_{\mathrm{te}} \big) - 0.5 \big)$. The factor $\chi(\epsilon, m) = (1-2\exp(-2m\epsilon^2))^2$ lower-bounds the probability that the sign of the gap difference, estimated from a test set of size $m$, agrees with its sign under the true distribution \citep{dziugaite2020search}. Thus, only hyperparameter-setting pairs with sign-consistency probability at least $0.5$ are retained. Some environments $\cE$ may consequently contain too few reliable comparisons, so a sign-error is calculated only when the effective sample size $n_{\mathrm{eff}} = (\sum_i \kappa_i)^2/\sum_i \kappa_i^2$ is at least 12. Some complexity measures have a large fraction of environments where the change in a hyperparameter does not result in a change of complexity measure, thus the sign-error is evaluated as 0.5. These measures may have an unfair advantage due to the large number of environments having relatively low sign-error. To resolve such issues, we remove all environments where the complexity measure remains constant.

Computing $\widehat{SE}$ within each experimental environment (each fixed configuration of the remaining hyperparameters) yields a \emph{distribution} of sign-errors per measure, whose upper tail reveals worst-case behavior. We report empirical cumulative distribution functions (ECDFs) and 90th-percentiles. 90th-percentiles indicate how well the complexity measure predicts the generalization gap for a large fraction of environments, thus indicating robustness as well as accuracy (\citet{dziugaite2020search}).
\section{Results}
\label{sec:results}

Training diagnostics and the distribution of empirical generalization gaps over all runs are reported in Appendix~\ref{app:training}.

\subsection{Baseline measures}
\label{sec:baseline}
\revtable{\begin{table}[t]\revcolor
    \centering
    \tightertab
    \begin{tabular}{lccccc}
        \toprule
        Complexity & & \multicolumn{2}{c}{$\widehat{SE}$, All} & \multicolumn{2}{c}{$\widehat{SE}$, Common} \\
        measure $\mu$ & $10^{3}\,\cK(\mu)$ & 90\% & Mean & 90\% & Mean \\
        \midrule
        $1/m_m$ & $\mathbf{4.565}$ & \textbf{0.7124} & \textbf{0.4799} & \textbf{0.7225} & \textbf{0.4783} \\
        $\sqrt{n_q}/m_m$ & 3.543 & 0.7144 & 0.4801 & 0.7274 & 0.4787 \\
        $n_p$ & 0.1932 & 0.7724 & 0.5038 & 0.7724 & 0.5038 \\
        $n_p/m_m$ & 0.4683 & 0.7256 & 0.4879 & 0.7531 & 0.4962 \\
        $n_p\sqrt{n_q}$ & 0.1564 & 0.7724 & 0.5038 & 0.7724 & 0.5038 \\
        \bottomrule
    \end{tabular}
    \caption{Five baseline complexity measures: conditional independence test score \eqref{eq:cit} (higher is better) and 90th-percentile and mean sign-error \eqref{eq:se} (lower is better) across all environments and across the Common environments in which the changed hyperparameter affects every measure (ansatz, number of re-uploadings, number of qubits). The maximum sign-error is 1 for every measure.}
    \label{tab:cit}
\end{table}
}

Table~\ref{tab:cit} reports both evaluation criteria for the five baseline complexity measures. In the conditional independence test, the pure margin measure $1/m_m$ attains the highest score. It exceeds the score of the qubit-normalized margin measure $\sqrt{n_q}/m_m$ and every parameter-based measure. The mixed measure $n_p/m_m$ underperforms compared to both inverse-margin measures, suggesting that multiplying margin information by a parameter count degrades rather than improves the signal.

The sign-error columns of Table~\ref{tab:cit} point the same way. No measure is completely robust (every measure exhibits environments with the maximal sign-error of 1), mirroring the classical findings of \citet{dziugaite2020search}. At the 90th-percentile, however, the margin-based measures achieve markedly lower sign-error than the parameter-based measures.

Figure~\ref{fig:hp_se_ecdf} decomposes the 90th-percentile sign-errors by the hyperparameter that is varied within each pair. Two observations stand out. First, the parameter-based measures $n_p$ and $n_p \sqrt{n_q}$ are \emph{constant} under changes of embedding, batch size, learning rate, and dataset, so those environments are excluded for them (Appendix~\ref{app:details}; dashes in Figure~\ref{fig:hp_se_ecdf}) and their scores rest on the three architecture axes, where the deficit is concentrated on the ansatz axis ($0.878$ versus $0.768$ for $1/m_m$). Second, the margin-based measures show lower 90th-percentile values for optimization hyperparameters (batch size and learning rate) compared to other hyperparameters, which is desirable: a good complexity measure should respond to whatever affects generalization, whichever hyperparameter caused it \citep{dziugaite2020search}.

\begin{figure}[t]
    \centering
    \includegraphics[width=0.55\columnwidth]{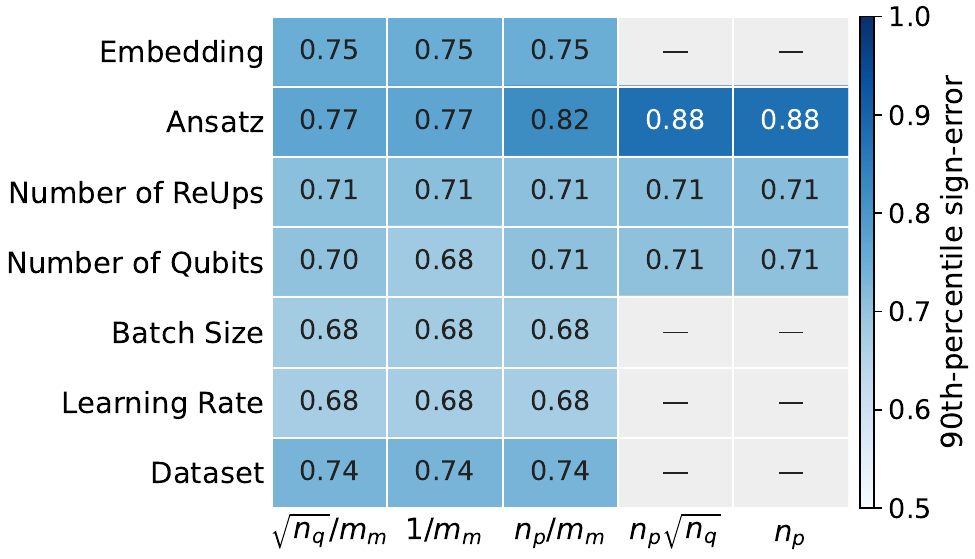}
    \caption{90th-percentile values of the sign-error empirical distribution, decomposed by the hyperparameter varied within each pair. Lower is better.}
    \label{fig:hp_se_ecdf}
\end{figure}

The overall advantage of margin-based measures could, in principle, stem solely from the axes on which parameter counts are constant. The Common columns of Table~\ref{tab:cit} therefore restrict the comparison to the hyperparameters that change \emph{every} measure, namely the ansatz, the number of re-uploadings, and the number of qubits. The differences shrink, but the margin-based measures still dominate at the 90th-percentile and mean. Full empirical cumulative distributions of sign-errors are further detailed in Appendix~\ref{app:se_ecdf} (Figure~\ref{fig:se_ecdf}).

\subsection{Theorem-derived measures}
\label{sec:composite}
Table~\ref{tab:derived} compares the best baseline, $1/m_m$, with the two theorem-derived measures of Section~\ref{sec:measures}, which combine the margin and the parameter count in the functional forms prescribed by the bounds. On the axes that leave $n_p$ and $n_q$ unchanged (embedding, batch size, learning rate, dataset) both order models exactly as $1/m_m$, so all differences arise on the architecture (Common) axes.

\revtable{\begin{table}[t]\revcolor
    \centering
    \tightertab
    \begin{tabular}{lccc}
        \toprule
        Complexity measure $\mu$ & $10^{3}\,\cK(\mu)$ & \multicolumn{2}{c}{90\% $\widehat{SE}$} \\
         & & All & Common \\
        \midrule
        $1/m_m$ & $\mathbf{4.565}$ & 0.7124 & 0.7225 \\
        $\sqrt{n_q}/m_m + c^*\sqrt{n_p}$ & 3.316 & $\mathbf{0.7117}$ & $\mathbf{0.7212}$ \\
        $n_p\ln(n_p/m_m)$ & 0.3657 & 0.7321 & 0.7699 \\
        \bottomrule
    \end{tabular}
    \caption{Best baseline versus the two theorem-derived measures of Section~\ref{sec:measures}: conditional independence score \eqref{eq:cit} (higher is better) and 90th-percentile sign-error over all and over the Common environments (lower is better); $c^*=0.075$. Confidence intervals are given in Appendix~\ref{app:ci}.}
    \label{tab:derived}
\end{table}
}

A composite complexity measure $\mu(c)$ of the form $\sqrt{n_q}/m_m + c\sqrt{n_p}$ that incorporates the number of qubits, number of parameters, and the margin following Theorem~\ref{thm:main} is optimized in-sample through grid search to minimize the 90th-percentile sign-error. The minimum 90th-percentile sign-error of $0.7117$ was achieved at $c^*=0.075$, an improvement of $0.0007$ over $1/m_m$ whose paired bootstrap interval includes zero ($[-0.0026, 0.0012]$; Appendix~\ref{app:ci}; held-out check in Appendix~\ref{app:composite}). When applied to the conditional independence test, the optimized composite complexity measure scored $\cK(\mu(c^*)) = 3.316\times10^{-3}$, which improves upon the parameter-based measures but not the pure margin. In fact, $\cK$ decreases monotonically in $c$ over the whole scanned range, from $3.55\times10^{-3}$ at $c=0$ to $6.6\times10^{-5}$ at $c=2$ (Table~\ref{tab:cscanK}), so the composite never improves on its base $\sqrt{n_q}/m_m$ in this test. 
The conditional independence test is designed to expose this. It scores the information a measure carries about the gap once the hyperparameters are fixed, and since $n_p$ is determined by three architecture hyperparameters, conditioning removes everything a term in $n_p$ could contribute. What remains is a pull of the model-pair ordering toward that of $n_p$, one of the weakest measures in Table~\ref{tab:cit}.

Since $c$ depends on the unitary radius $b$, it should depend on the ansatz type and qubit number. Optimizing one weight $c_{A,n_q}$ per (ansatz, $n_q$) pair lowers the 90th-percentile sign-error to $0.6988$ on the five axes that leave $c_{A,n_q}$ fixed (where $1/m_m$ scores $0.7052$), but only to $0.7108$ (versus $0.7124$, within the bootstrap uncertainty) over all seven axes, with $\cK(\mu(c^*_{A,n_q})) = 2.630\times 10^{-3}$, again below the pure margin measure (Appendix~\ref{app:composite}).

Corollary~\ref{cor:b0} places the parameter count outside and the margin inside a logarithm. The measure $n_p\ln(n_p/m_m)$ is markedly worse than $1/m_m$ on both criteria (Table~\ref{tab:derived}), because its ordering is dominated by the parameter count: $\ln n_p$ ranges from $2.77$ to $5.26$ across the grid, whereas $\ln(1/m_m)$ has median $0.74$, so the term $n_p\ln n_p$ exceeds the margin-dependent term $n_p\ln(1/m_m)$ in every run. Consequently $n_p\ln(n_p/m_m)$ orders $97.7\%$ of the pairs of runs with different $n_p$ exactly as $n_p$ does (Kendall $\tau = 0.92$ with $n_p$, $-0.06$ with $1/m_m$) and inherits the conditional independence score of the parameter count.

Finally, we score the empirical margin loss $\widehat R_\gamma$, i.e., the fraction of training points with margin at most $\gamma$. At each of $\gamma \in \{0.05, 0.1, 0.2\}$ it outperforms $1/m_m$ on both criteria, with $\cK$ up to $1.56\times10^{-2}$ and a 90th-percentile sign-error down to $0.706$, a gain whose paired bootstrap interval excludes zero (Appendix~\ref{app:rgamma}). The lower tail of the margin distribution thus carries more information about the gap than its median. Unlike $\widehat R_\gamma$, however, $1/m_m$ does not depend on a margin level, which makes it the simpler summary of the margin distribution.

\section{Conclusion}
\label{sec:conclusion}

We extended margin-based generalization theory for quantum neural networks from quantum to classical data. For a fixed data encoding, the quantum-data margin bound of \citet{hur2025margins} transfers without modification (Proposition~\ref{prop:fixed}). For trainable encodings, such as data re-uploading circuits, we proved a bound in which the margin-normalized term $(bE/\gamma)\sqrt{n/m}$ of the quantum-data case is accompanied by an additive encoding-capacity term of order $\sqrt{L/m}$ (Theorem~\ref{thm:main}), using a two-scale covering argument that treats the encoding parameters and the post-encoding unitary separately. When all trainable parameters lie in the encoding circuit, the complexity term reduces to a parameter count and the margin enters only logarithmically (Corollary~\ref{cor:b0}).

Empirically, across 19{,}440 QNNs trained on binary tasks from MNIST, KMNIST, and Fashion-MNIST, the inverse margin is the strongest of the measures compared in predicting the generalization gap, under both the conditional independence test and the sign-error analysis. It outperforms the parameter-count baselines as well as the measures derived from Theorem~\ref{thm:main} and Corollary~\ref{cor:b0}, which combine the margin with the parameter count. 
The composite measure improves on the sign-error of the inverse margin only when its weight is tuned in-sample, and then by less than the bootstrap uncertainty. Since the parameter count is determined by the architecture hyperparameters on which the test conditions, it adds little information about the generalization gap beyond what the margin already carries. These findings extend the quantum-data observations of \citet{hur2025margins} to classical data.

A natural follow-up question is how hardware noise affects these findings, as noise such as depolarizing noise contracts the measurement probabilities toward uniform and thereby shrinks the margin. On the theoretical side, Theorem~\ref{thm:main} assumes that trainable gates and data-dependent gates are separate, as in Equation~\eqref{eq:encoding}. Extending the analysis to encodings that process data and trainable parameters jointly, such as classical neural encoders \citep{hur2024neural} or rotation gates with trainable frequencies \citep{jaderberg2024let}, would bring a broader class of QNNs within the scope of the theory.

\subsection*{Code Availability}

The code can be found in the following repository: \url{https://github.com/jyryu98/margin-gen-qml-classical-encoding}.

\subsection*{Acknowledgments}
This work was supported by the Institute of Information \& Communications Technology Planning \& Evaluation (IITP) ITRC (Information Technology Research Center) funded by the the Ministry of Science and ICT (MSIT), Korea (IITP-2026-RS-2026-25519864), by the IITP grant (No. 2019-0-00003, Research and Development of Core Technologies for Programming, Running, Implementing and Validating of Fault-Tolerant Quantum Computing System), by the National Research Foundation of Korea (RS-2025-02309510), by Korean ARPA-H Project through the Korea Health Industry Development Institute (KHIDI), funded by the Ministry of Health \& Welfare, Korea (RS-2025-25456722), and National Institute of Health (NIH) research project (2026-ER0901-00).


\clearpage
\appendix
\onecolumn

\section{Proof of Theorem~\ref{thm:main}}
\label{app:proof}

Throughout, $n$ is the number of qubits, $N = 2^n$, $m$ the sample size, and $k$ the number of classes. Here $E = ( \sum_{i=1}^k \snorm{E_i}^2 )^{1/2}$ and $b>0$ satisfies $\| U - \Uref \|_{2,1} \le b$ for all $U \in \UQ$. We write $X(\phi) = [\psi(x_1;\phi), \dots, \psi(x_m;\phi)] \in \bbC^{N \times m}$ for the encoded sample matrix. Since its columns are normalized quantum states, $\| X(\phi) \|_2 = \sqrt m$ for every $\phi$, where $\|\cdot\|_2$ on matrices denotes the Frobenius norm. For a matrix $X$, we write $\UQ X = \{ UX : U \in \UQ \}$ for its image under the unitary class.

The argument first establishes the following explicit-logarithmic version of Theorem~\ref{thm:main} under two side conditions. Appendix~\ref{app:conditions} then removes those conditions, which only exclude regimes where Theorem~\ref{thm:main} holds for elementary reasons.

\begin{theorem}[Explicit-logarithmic version of Theorem~\ref{thm:main}]
\label{thm:full}
In the setting of Theorem~\ref{thm:main}, assume $bE \ge \gamma/4$ and $m \ge m_0 = 144\, b^2 E^2 \ln(4N^2) / \gamma^2$. Then the empirical Rademacher complexity in \eqref{eq:margin-lemma} obeys
\begin{equation*}
\Rad\big( (\cF_\gamma)_{|S} \big) \in O\Bigg(
\frac{bE}{\gamma} \sqrt{\frac{n}{m}} \left( 1 + \ln \frac{\gamma \sqrt m}{bE\sqrt n} \right)
+ \sqrt{ \frac{L}{m} \Big( 1 + \ln\Big( 1 + \frac{8\pi E H \sqrt L}{\gamma} \Big) \Big) }
\Bigg).
\end{equation*}
\end{theorem}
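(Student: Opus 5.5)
We follow the proof sketch in Section~\ref{sec:theory}. The plan is to reduce the Rademacher complexity to covering numbers of $\{UX(\phi)\}$, factor that covering number with Lemma~\ref{lem:twoscale}, bound each factor, and integrate with Dudley's entropy integral.

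\textbf{Step 1 (Lipschitz reduction).} Each map $z\mapsto \ell_\gamma(\cM(g(z),y))$ is $(4E/\gamma)$-Lipschitz in the Euclidean norm on unit vectors, uniformly in $y$, as used in \citet{hur2025margins}. Hence, with respect to the normalized empirical $L_2$ metric $\rho(f,f') = (m^{-1}\sum_i (f-f')^2)^{1/2}$,
\[
\cN\big((\cF_\gamma)_{|S}, \epsilon, \rho\big) \le \cN\big(\{UX(\phi)\}, \gamma\sqrt m\,\epsilon/(4E), \|\cdot\|_2\big).
\]

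\textbf{Step 2 (two-scale split).} Apply Lemma~\ref{lem:twoscale} with $r_1 = r_2 = \gamma\sqrt m\,\epsilon/(8E)$. The lemma follows from $\|UX(\phi)-U'X(\phi')\|_2 \le \|X(\phi)-X(\phi')\|_2 + \|UX(\phi')-U'X(\phi')\|_2$ and unitary invariance. For the unitary factor, every $X(\phi')$ has Frobenius norm $\sqrt m$, so the matrix-sparsification bound of \citet{hur2025margins} applies verbatim. It gives a log-covering number of order $b^2 m \ln(4N^2)/r_2^2 = 64\,b^2E^2\ln(4N^2)/(\gamma^2\epsilon^2)$, uniformly in $\phi'$.

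For the encoding factor, we telescope the product \eqref{eq:encoding}. Using $\|e^{iH_j\phi_j}-e^{iH_j\phi'_j}\|_\sigma \le \|H_j\|_\sigma|\phi_j-\phi'_j|$ and Cauchy--Schwarz, each column satisfies $\|\psi(x;\phi)-\psi(x;\phi')\| \le H\|\phi-\phi'\|_2$, so $\|X(\phi)-X(\phi')\|_2\le \sqrt m\,H\|\phi-\phi'\|_2$. A Euclidean $r$-net of $[-\pi,\pi]^L$ has log-cardinality at most $L\ln(1+\pi\sqrt L/r)$. Taking $r = r_1/(\sqrt m H) = \gamma\epsilon/(8EH)$ gives a log-covering number of at most $L\ln(1+8\pi EH\sqrt L/(\gamma\epsilon))$.

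\textbf{Step 3 (Dudley).} Since losses lie in $[0,1]$, the Dudley bound reads
\[
\Rad \le \inf_{\alpha>0}\Big(4\alpha + \tfrac{12}{\sqrt m}\int_\alpha^1 \sqrt{\ln\cN(\epsilon)}\,d\epsilon\Big),
\]
and we split $\sqrt{A+B}\le\sqrt A+\sqrt B$.

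The unitary part contributes $\frac{12}{\sqrt m}\cdot\frac{8bE\sqrt{\ln 4N^2}}{\gamma}\ln(1/\alpha)$. We choose $\alpha \propto (bE/\gamma)\sqrt{\ln(4N^2)/m}$, with $\ln(4N^2)=O(n)$. The hypothesis $m\ge m_0$ guarantees $\alpha\le 1$, and $bE\ge\gamma/4$ keeps $\alpha$ from becoming degenerate. This yields the term $(bE/\gamma)\sqrt{n/m}\,(1+\ln(\gamma\sqrt m/(bE\sqrt n)))$.

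The encoding part is bounded by extending the integral to $[0,1]$. Writing $c = 8\pi EH\sqrt L/\gamma$, we use $\ln(1+c/\epsilon)\le \ln(1+c)+\ln(1/\epsilon)$ and $\int_0^1\sqrt{\ln(1/\epsilon)}\,d\epsilon<\infty$. This gives $O(\sqrt{(L/m)(1+\ln(1+c))})$, which matches the stated bound.

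\textbf{Main obstacle.} The delicate step is the unitary-factor citation together with the Dudley bookkeeping. We must check that the sparsification cover of \citet{hur2025margins} depends on $X$ only through $\|X\|_2$, so that the supremum over $\phi'$ in Lemma~\ref{lem:twoscale} is harmless. We must also confirm that this cover is centered via $\Uref$, so that it covers $\{(U-\Uref)X\}$ and shifts uniformly. Finally, the constants in $m_0$ must indeed make the chosen $\alpha$ admissible. I would verify the scale dependence of the cited covering lemma first, and only then fix the constants.
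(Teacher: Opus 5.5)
Your proposal is correct and follows the paper's proof step for step: the Lipschitz reduction, the equal-radius two-scale split, the sparsification bound holding uniformly in $\phi'$ because $\|X(\phi')\|_2=\sqrt m$, the $\sqrt m\,H$-Lipschitz cube cover, and Dudley with the encoder integral extended to $0$; the only difference is cosmetic, namely that you work in the normalized empirical metric rather than the unnormalized Frobenius/Euclidean one. One correction: the hypothesis $bE\ge\gamma/4$ is not what keeps $\alpha$ nondegenerate but what lets you drop the ceiling in the cited bound $\lceil 2mb^2/r^2\rceil\ln 4N^2$, since it forces that argument to be at least $8$ for every radius $\epsilon\le 1$, so that $\lceil x\rceil\le\tfrac98 x$; without it an additive $\ln 4N^2$ survives, and its Dudley contribution of order $\sqrt{n/m}$ is not dominated by $(bE/\gamma)\sqrt{n/m}$.
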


\subsection{Reduction to Matrix Covering}

The quantum-data analysis of \citet{hur2025margins} rests on two facts. First, for any label $y$, the map $z \mapsto \ell_\gamma(\cM(g(z), y))$ is $(4E/\gamma)$-Lipschitz in $\|\cdot\|_2$ on normalized states, since $\ell_\gamma$ is $(1/\gamma)$-Lipschitz, $\cM$ is $2$-Lipschitz in $\ell_\infty$, and each measurement function $g_i(z) = z^\dagger E_i z$ is $2\snorm{E_i}$-Lipschitz. Second, the image $\UQ X$ of a fixed data matrix admits the covering bound (their Equation~(9), obtained from Maurey sparsification \citep{zhang2002covering} applied to complex matrices)
\begin{equation}
\label{eq:orbit-cover}
\ln \cN\big( \{ U X : U \in \UQ \},\, r,\, \|\cdot\|_2 \big)
\le \left\lceil \frac{2 m b^2}{r^2} \right\rceil \ln 4N^2
\end{equation}
for any matrix $X$ with $\|X\|_2 = \sqrt m$ and any $r > 0$.

For the class \eqref{eq:class}, the Lipschitz property gives, for any $\epsilon > 0$,
\begin{equation}
\label{eq:lip-reduction}
\ln \cN\big( (\cF_\gamma)_{|S}, \epsilon, \|\cdot\|_2 \big)
\le \ln \cN\Big( \{ U X(\phi) : U \in \UQ,\ \phi \in [-\pi,\pi]^L \},\, \tfrac{\epsilon\gamma}{4E},\, \|\cdot\|_2 \Big),
\end{equation}
but the right-hand side is no longer the image of a single matrix. The data matrix moves with $\phi$.

\subsection{Two-Scale Covering}

\begin{lemma}[Lemma~\ref{lem:twoscale}, restated]
For any $r_1, r_2 > 0$,
\begin{equation*}
\cN\big( \{ U X(\phi) \},\, r_1 + r_2,\, \|\cdot\|_2 \big)
\le \cN\big( \{ X(\phi) : \phi \in [-\pi,\pi]^L \},\, r_1,\, \|\cdot\|_2 \big)
\cdot \sup_{\phi'} \cN\big( \{ U X(\phi') : U \in \UQ \},\, r_2,\, \|\cdot\|_2 \big).
\end{equation*}
\end{lemma}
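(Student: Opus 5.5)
The plan is to build the cover explicitly as a product of two covers and then check its radius with the triangle inequality together with unitary invariance of the Frobenius norm. Write $\cN_1 = \cN(\{X(\phi)\}, r_1, \|\cdot\|_2)$ and $\cN_2 = \sup_{\phi'} \cN(\UQ X(\phi'), r_2, \|\cdot\|_2)$. We may assume both are finite, since otherwise there is nothing to prove. Fix a minimal $r_1$-cover $\{X(\phi'_1), \dots, X(\phi'_{\cN_1})\}$ of the encoded-matrix family. The cover centers should be taken of the form $X(\phi'_a)$ with $\phi'_a$ in the parameter cube, so that each center lies in the family. If the covering number is defined with arbitrary centers, I will either use the internal version, which costs a factor of 2 in the radius, or better, note that the argument below works for any center $C_a$ with $\|C_a\|_2=\sqrt m$. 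I will adopt whichever convention Appendix A uses and state it. The sup in the lemma is over $\phi'$ in the cube, so I will insist on internal centers $X(\phi'_a)$.

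Next, for each $a$ pick an $r_2$-cover $\{Z_{a,1},\dots,Z_{a,\cN_2}\}$ of $\UQ X(\phi'_a)$. Each of these has at most $\cN_2$ elements by the definition of the supremum. The candidate cover of $\{UX(\phi)\}$ is the union $\{Z_{a,c}\}$, which has at most $\cN_1\cN_2$ elements.

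To verify the radius, take any $U\in\UQ$ and $\phi$. Choose $a$ with $\|X(\phi)-X(\phi'_a)\|_2\le r_1$. Then choose $c$ with $\|UX(\phi'_a)-Z_{a,c}\|_2\le r_2$, which is possible because $UX(\phi'_a)\in \UQ X(\phi'_a)$. The triangle inequality gives
\begin{equation*}
\|UX(\phi)-Z_{a,c}\|_2 \le \|U(X(\phi)-X(\phi'_a))\|_2 + \|UX(\phi'_a)-Z_{a,c}\|_2 \le r_1 + r_2 .
\end{equation*}
The bound on the first term uses $\|UA\|_2^2=\operatorname{tr}(A^\dagger U^\dagger U A)=\|A\|_2^2$ for unitary $U$. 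This completes the argument.

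The only delicate point is the center convention. I expect it to be the main obstacle, though a minor one. The unitary step needs the net of $\{X(\phi)\}$ to consist of points at which $\UQ X(\cdot)$ is controlled by the supremum. With internal centers this is immediate. I will state the lemma for internal covering numbers, noting that Step (4) constructs the cover from a parameter-cube net and so its centers are automatically of the form $X(\phi')$. Alternatively, one can observe that \eqref{eq:orbit-cover} holds for any center of Frobenius norm $\sqrt m$.
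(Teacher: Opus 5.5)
Your proposal is correct and takes essentially the same route as the paper: an internal $r_1$-cover of $\{X(\phi)\}$ by points $X(\phi')$ with $\phi'$ in the cube, an $r_2$-cover of each image $\UQ X(\phi')$, and the triangle inequality combined with unitary invariance of the Frobenius norm. Your decision to insist on internal centers matches the paper's remark that the grid of Lemma~\ref{lem:cube} supplies internal covers. Your fallback of arbitrary centers of norm $\sqrt m$ would not be controlled by the supremum over $\phi'$ as the lemma is stated, so it is good that you did not rely on it.
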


Here the first covering number is over covers by points of $\{X(\phi)\}$ itself (internal covers): the grid of Lemma~\ref{lem:cube} is one, and an internal covering number is at most the external one at half the radius.

\begin{proof}
Take $C_\Phi \subset [-\pi,\pi]^L$ such that $\{ X(\phi') : \phi' \in C_\Phi \}$ is an $r_1$-cover of $\{ X(\phi) \}$. The grid construction of Lemma~\ref{lem:cube} below produces the cover from points \emph{inside} the cube, so every cover point is itself an admissible data matrix. For each $\phi' \in C_\Phi$, let $C_{\phi'}$ be an $r_2$-cover of the image $\UQ X(\phi')$. Given any $(U, \phi)$, choose $\phi' \in C_\Phi$ with $\| X(\phi) - X(\phi') \|_2 \le r_1$. By unitary invariance of the Frobenius norm, $\| U X(\phi) - U X(\phi') \|_2 = \| X(\phi) - X(\phi') \|_2 \le r_1$. Then choose $Z \in C_{\phi'}$ with $\| U X(\phi') - Z \|_2 \le r_2$. Hence $\bigcup_{\phi' \in C_\Phi} C_{\phi'}$ is an $(r_1 + r_2)$-cover of cardinality at most the claimed product.
\end{proof}

Splitting the radius in \eqref{eq:lip-reduction} as $\frac{\epsilon\gamma}{4E} = \frac{\epsilon\gamma}{8E} + \frac{\epsilon\gamma}{8E}$ and applying the lemma,
\begin{equation}
\label{eq:split}
\ln \cN\big( (\cF_\gamma)_{|S}, \epsilon, \|\cdot\|_2 \big)
\le \ln \cN\Big( \{ X(\phi) \},\, \tfrac{\epsilon\gamma}{8E},\, \|\cdot\|_2 \Big)
+ \sup_{\phi'} \ln \cN\Big( \{ U X(\phi') : U \in \UQ \},\, \tfrac{\epsilon\gamma}{8E},\, \|\cdot\|_2 \Big).
\end{equation}
Since every $X(\phi')$ satisfies $\| X(\phi') \|_2 = \sqrt m$, the second term follows from \eqref{eq:orbit-cover} at radius $\epsilon\gamma/8E$:
\begin{equation}
\label{eq:orbit-term}
\sup_{\phi'} \ln \cN\Big( \{ U X(\phi') \},\, \tfrac{\epsilon\gamma}{8E},\, \|\cdot\|_2 \Big)
\le \left\lceil \frac{128\, m b^2 E^2}{\epsilon^2 \gamma^2} \right\rceil \ln 4N^2 .
\end{equation}

\subsection{Lipschitz Property of the Encoding}

\begin{lemma}
\label{lem:enc-lip}
For all $\phi, \phi' \in [-\pi,\pi]^L$,
\begin{equation*}
\| \psi(x; \phi) - \psi(x; \phi') \|_2 \le H \| \phi - \phi' \|_2,
\end{equation*}
with $H = \big( \sum_{j=1}^L \snorm{H_j}^2 \big)^{1/2}$, and consequently $\| X(\phi) - X(\phi') \|_2 \le \sqrt m\, H\, \| \phi - \phi' \|_2$.
\end{lemma}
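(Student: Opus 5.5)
The plan is a telescoping (hybrid) argument over the $L$ factors of the product \eqref{eq:encoding}, combined with unitary invariance of the Euclidean norm and the elementary bound $\|e^{iH\theta}-e^{iH\theta'}\|_\sigma \le \|H\|_\sigma|\theta-\theta'|$. Fix $x$ and write $V(x;\phi) = A_L(\phi_L) A_{L-1}(\phi_{L-1})\cdots A_1(\phi_1) W_0(x)$ with $A_j(\theta) = e^{iH_j\theta}W_j(x)$. Define the hybrid unitaries $V^{(t)}$, $t=0,\dots,L$, obtained by using $\phi'_j$ in the first $t$ factors and $\phi_j$ in the remaining ones. Then $V^{(0)} = V(x;\phi)$ and $V^{(L)} = V(x;\phi')$. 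Consecutive hybrids differ only in the $t$-th factor, so
\begin{equation*}
V^{(t-1)} - V^{(t)} = P_t\,\big(e^{iH_t\phi_t} - e^{iH_t\phi'_t}\big)\, W_t(x)\, Q_t ,
\end{equation*}
where $P_t$ and $Q_t$ are products of unitaries.

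For the single-factor estimate, I will diagonalize $H_t$ and use $|e^{i\lambda\theta}-e^{i\lambda\theta'}| \le |\lambda||\theta-\theta'|$ for each eigenvalue $\lambda$. This gives
\begin{equation*}
\|e^{iH_t\phi_t}-e^{iH_t\phi'_t}\|_\sigma \le \|H_t\|_\sigma\,|\phi_t-\phi'_t| .
\end{equation*}
Applying the difference to the unit vector $|0\rangle^{\otimes n}$ and using that $P_t$, $W_t(x)$ and $Q_t$ preserve norms, the triangle inequality over the telescoping sum yields
\begin{equation*}
\|\psi(x;\phi)-\psi(x;\phi')\|_2 \le \sum_{t=1}^L \|H_t\|_\sigma\,|\phi_t-\phi'_t| \le H\,\|\phi-\phi'\|_2 ,
\end{equation*}
where the last step is Cauchy--Schwarz. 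The bound does not depend on $x$, since the data-dependent unitaries $W_j(x)$ never enter the norm. This uniformity in $x$ is what the corollary needs.

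For the matrix statement, sum the squared column bounds:
\begin{equation*}
\|X(\phi)-X(\phi')\|_2^2 = \sum_{i=1}^m \|\psi(x_i;\phi)-\psi(x_i;\phi')\|_2^2 \le m H^2\|\phi-\phi'\|_2^2 ,
\end{equation*}
and take the square root.

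There is no real obstacle here. The only point needing care is bookkeeping: the hybrid ordering must keep every surrounding factor unitary, so that the operator-norm bound on the single changed exponential transfers directly to the state difference.
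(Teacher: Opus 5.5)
Your proposal is correct and follows essentially the paper's route: a telescoping argument over the $L$ factors in which all surrounding factors, including the $W_j(x)$, are unitary and so drop out of the norm, the single-factor bound $\|e^{iH_t\phi_t}-e^{iH_t\phi'_t}\|_\sigma \le \|H_t\|_\sigma|\phi_t-\phi'_t|$, Cauchy--Schwarz, and a column-wise bound for the matrix statement. The only cosmetic difference is that you prove the single-factor bound by diagonalizing $H_t$, whereas the paper writes the difference as $iH\int_b^a e^{iHt}\,dt$; the two arguments are interchangeable.
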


\begin{proof}
For unitaries $A_1, \dots, A_L$ and $B_1, \dots, B_L$, a telescoping argument gives $\snorm{ \prod_j A_j - \prod_j B_j } \le \sum_j \snorm{ A_j - B_j }$, since all remaining factors have unit spectral norm. The fixed interleaved unitaries $W_j(x)$ cancel in each telescoping step. For a single layer,
\begin{equation*}
\snorm{ e^{iHa} - e^{iHb} }
= \Big\lVert iH \int_b^a e^{iHt}\, dt \Big\rVert_\sigma
\le \snorm{H}\, |a - b| .
\end{equation*}
Combining the two facts and applying the Cauchy--Schwarz inequality,
\begin{equation*}
\| \psi(x;\phi) - \psi(x;\phi') \|_2
\le \sum_{j=1}^L \snorm{H_j} |\phi_j - \phi_j'|
\le H \| \phi - \phi' \|_2 .
\end{equation*}
The matrix statement follows because each of the $m$ columns changes by at most $H \|\phi - \phi'\|_2$ in Euclidean norm.
\end{proof}

\paragraph{More general parameterized layers.}
The subsequent argument uses only the uniform Lipschitz property in Lemma~\ref{lem:enc-lip}. It therefore also applies to encoders $V(x;\phi)=\prod_{j=1}^L V_j(x;\phi_j)$ satisfying $\snorm{\partial_{\phi_j}V_j(x;\phi_j)}\le c_j$ uniformly in $x$ and $\phi_j$, with $H$ replaced by $(\sum_j c_j^2)^{1/2}$. This includes data--parameter coupled rotations such as $e^{i\phi_j x_\ell H}$ when the data are bounded. Encoders implemented by a separate classical neural network require a covering bound for that network and are outside Theorem~\ref{thm:main}.

\subsection{Covering the Parameter Cube}

\begin{lemma}
\label{lem:cube}
For any $r > 0$,
\begin{equation*}
\ln \cN\big( [-\pi,\pi]^L,\, r,\, \|\cdot\|_2 \big)
\le L \ln\Big( 1 + \frac{\pi \sqrt L}{r} \Big).
\end{equation*}
\end{lemma}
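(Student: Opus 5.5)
The plan is an explicit grid construction whose points lie inside the cube, which is also the internal cover required by Lemma~\ref{lem:twoscale}. Fix $r>0$ and a per-coordinate spacing $s>0$. Partition $[-\pi,\pi]$ into $M=\lceil 2\pi/s\rceil$ intervals of length at most $s$, and take their midpoints as one-dimensional grid points. These midpoints lie in $[-\pi,\pi]$. Every $t\in[-\pi,\pi]$ is then within $s/2$ of some grid point.

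Next, form the product grid $G$ of these midpoints in $[-\pi,\pi]^L$. It has $|G|=M^L$ points, all inside the cube. For any $\phi$, rounding each coordinate to its nearest midpoint gives a $\phi'\in G$ with
\begin{equation*}
\|\phi-\phi'\|_2\le \sqrt L\, s/2 .
\end{equation*}

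Choosing $s=2r/\sqrt L$ makes $G$ an $r$-cover in $\|\cdot\|_2$. Then
\begin{equation*}
M=\Big\lceil \frac{\pi\sqrt L}{r}\Big\rceil \le 1+\frac{\pi\sqrt L}{r},
\end{equation*}
and taking logarithms gives $\ln|G|\le L\ln(1+\pi\sqrt L/r)$, which is the claimed bound. The degenerate case $L=0$ is trivial, since the left side is $\ln 1=0$.

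The only point needing care is the ceiling. We need $\lceil a\rceil\le 1+a$ for $a>0$, which holds. We also need the midpoints to stay in the cube even when $s>2\pi$. In that case $M=1$ and the single midpoint $0$ covers $[-\pi,\pi]$ within $\pi\le s/2$, so the argument still goes through. There is no real obstacle here.

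As an alternative, a volumetric packing argument would give a comparable bound up to constants. I would use the grid, because it directly yields cover points inside $[-\pi,\pi]^L$ with the exact constant stated.
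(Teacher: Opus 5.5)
Your proof is correct and is essentially the paper's argument: an axis-aligned grid inside the cube with spacing $2r/\sqrt L$, giving $\lceil \pi\sqrt L/r\rceil$ points per axis and per-coordinate error at most $r/\sqrt L$, followed by $\lceil a\rceil \le 1+a$. Your use of interval midpoints makes explicit that the cover points lie inside $[-\pi,\pi]^L$, which the paper also relies on for the internal cover in Lemma~\ref{lem:twoscale}.
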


\begin{proof}
An axis-aligned grid of spacing $2r/\sqrt L$ inside $[-\pi,\pi]^L$ matches each coordinate to within $r/\sqrt L$ and is therefore an $r$-cover in $\|\cdot\|_2$, with $\lceil \pi \sqrt L / r \rceil$ points per axis, giving $\ln \cN \le L \ln \lceil \pi \sqrt L / r \rceil \le L \ln( 1 + \pi \sqrt L / r )$.
\end{proof}

We state the bound in the $\ln(1 + \cdot)$ form, standard for covering numbers of bounded sets (compare $\cN(B_2^d, r, \|\cdot\|_2) \le (1 + 2R/r)^d$ for the radius-$R$ ball). The covering number of a nonempty set is at least one, so a valid upper bound on $\ln\cN$ must remain nonnegative for every $r > 0$. An expression of the form $L \ln(K/\epsilon)$ fails this for $\epsilon > K$, while $\ln(1+\cdot)$ needs no side condition. In the regime $\epsilon \ll K$ relevant inside Dudley's integral, the two forms agree up to constants.

By Lemma~\ref{lem:enc-lip}, an $\frac{\epsilon\gamma}{8EH\sqrt m}$-cover of the cube induces an $\frac{\epsilon\gamma}{8E}$-cover of $\{ X(\phi) \}$, so
\begin{equation}
\label{eq:enc-term}
\ln \cN\Big( \{ X(\phi) \},\, \tfrac{\epsilon\gamma}{8E},\, \|\cdot\|_2 \Big)
\le L \ln\Big( 1 + \frac{K}{\epsilon} \Big),
\end{equation}
where $K = 8 \pi E H \sqrt{mL} / \gamma$.

\subsection{Covering Bound for the Hypothesis Class}

Under the assumption $bE \ge \gamma/4$ of Theorem~\ref{thm:full}, for all $0 < \epsilon \le \sqrt m$ one has $128\, m b^2 E^2 / (\epsilon^2 \gamma^2) \ge 8$, so $\lceil x \rceil \le \tfrac 98 x$ (valid for $x \ge 8$) absorbs the ceiling in \eqref{eq:orbit-term}. Radii $\epsilon > \sqrt m$ are irrelevant since $(\cF_\gamma)_{|S} \subseteq [0,1]^m$ has diameter at most $\sqrt m$. Combining \eqref{eq:split}, \eqref{eq:orbit-term}, and \eqref{eq:enc-term} gives, for all $0 < \epsilon \le \sqrt m$,
\begin{equation}
\label{eq:final-cover}
\ln \cN\big( (\cF_\gamma)_{|S}, \epsilon, \|\cdot\|_2 \big)
\le L \ln\Big( 1 + \frac{K}{\epsilon} \Big) + \frac{A}{\epsilon^2},
\end{equation}
with $A = 144\, m b^2 E^2 \ln(4N^2) / \gamma^2$.

\subsection{Dudley's Entropy Integral}

Dudley's entropy integral \citep[Lemma~A.5]{bartlett2017spectrally} gives
\begin{equation*}
\Rad\big( (\cF_\gamma)_{|S} \big)
\le \inf_{\alpha > 0} \Bigg( \frac{4\alpha}{\sqrt m}
+ \frac{12}{m} \int_\alpha^{\sqrt m} \sqrt{ \ln \cN\big( (\cF_\gamma)_{|S}, \beta, \|\cdot\|_2 \big) }\, d\beta \Bigg).
\end{equation*}
The lemma is stated for sets containing the origin, and applying it to $(\cF_\gamma)_{|S} \cup \{0\}$ increases each covering number by at most one element, affecting only absolute constants. Using $\sqrt{a + b} \le \sqrt a + \sqrt b$ on \eqref{eq:final-cover}, the two contributions are
\begin{equation*}
\frac{12}{m} \int_\alpha^{\sqrt m} \frac{\sqrt A}{\beta}\, d\beta
= \frac{12 \sqrt A}{m} \ln \frac{\sqrt m}{\alpha}
\qquad\text{and}\qquad
\frac{12}{m} \int_\alpha^{\sqrt m} \sqrt{ L \ln(1 + K/\beta) }\, d\beta
\le 12 \sqrt{\frac{L}{m}} \left( \sqrt{ \ln(1 + K') } + 0.89 \right),
\end{equation*}
where $K' = K/\sqrt m = 8\pi E H \sqrt L/\gamma$. The second bound follows by extending the integral down to $0$, which is allowed because its integrand is nonnegative and integrable at $0$: substituting $\beta = \sqrt m\, u$ and using $\ln(1 + K'/u) \le \ln(1 + K') + \ln(1/u)$ for $0 < u \le 1$ together with $\int_0^1 \sqrt{\ln(1/u)}\, du = \sqrt\pi/2 \le 0.89$, exactly as in Appendix~\ref{app:corproof}. The cutoff $\alpha$ is thus needed only for the unitary term. Choose
\begin{equation*}
\alpha = \sqrt{\frac{A}{m}} = \frac{12\, bE \sqrt{\ln 4N^2}}{\gamma},
\end{equation*}
which is admissible ($\alpha \le \sqrt m$) precisely when $m \ge m_0$, the sample-size condition of Theorem~\ref{thm:full}. Then
\begin{equation}
\label{eq:rad-final}
\Rad\big( (\cF_\gamma)_{|S} \big)
\le \frac{4 \sqrt A}{m} + \frac{12 \sqrt A}{m} \ln \frac{m}{\sqrt A} + 12 \sqrt{\frac{L}{m}} \left( \sqrt{ \ln(1 + K') } + 0.89 \right),
\end{equation}
with
\begin{equation*}
\frac{\sqrt A}{m} = \frac{12\, bE}{\gamma} \sqrt{\frac{\ln 4N^2}{m}} .
\end{equation*}
Since $\ln 4N^2 = 2(n+1)\ln 2 = \Theta(n)$ and $\sqrt a + 0.89 \le \sqrt{2(1 + a)}$ for $a \ge 0$,
\begin{align*}
\Rad\big( (\cF_\gamma)_{|S} \big)
&\in O\Bigg( \frac{bE}{\gamma} \sqrt{\frac nm} \left( 1 + \ln \frac{\gamma \sqrt m}{bE \sqrt n} \right)
+ \sqrt{ \frac Lm \Big( 1 + \ln\Big( 1 + \frac{8\pi E H \sqrt L}{\gamma} \Big) \Big) } \Bigg) \\
&\in \Otil\left( \frac{bE}{\gamma} \sqrt{\frac nm} + \sqrt{\frac Lm} \right).
\end{align*}
This establishes Theorem~\ref{thm:full}, and substituting into the margin bound \eqref{eq:margin-lemma} yields Theorem~\ref{thm:main} whenever $bE \ge \gamma/4$ and $m \ge m_0$. \hfill$\qed$

\subsection{Removing the Side Conditions}
\label{app:conditions}

It remains to prove Theorem~\ref{thm:main} when either side condition of Theorem~\ref{thm:full} fails.

\paragraph{Small samples.}
Suppose $m < m_0$. Then $(bE/\gamma) \sqrt{\ln(4N^2)/m} > 1/12$, and since $\ln 4N^2 = 2(n+1) \ln 2 \le 4 n \ln 2$ for $n \ge 1$, the first term of Theorem~\ref{thm:main} satisfies $(bE/\gamma) \sqrt{n/m} > 1/20$. Once the constant hidden in the $\Otil$ is at least $20$, the right-hand side of the bound exceeds one, and the bound holds trivially because $R(h) \le 1$.

\paragraph{Weak unitary classes.}
Suppose $0<bE < \gamma/4$ and $m \ge m_0$. The ceiling in \eqref{eq:orbit-term} can then no longer be absorbed multiplicatively, so we keep it via $\lceil x \rceil \le x + 1$, and \eqref{eq:final-cover} acquires the additive term $\ln 4N^2$. This term is active only at small radii. Since $\snorm{U - \Uref} \le \| U - \Uref \|_2 \le \| U - \Uref \|_{2,1} \le b$, every matrix in the image $\UQ X(\phi')$ satisfies $\| U X(\phi') - \Uref X(\phi') \|_2 \le \snorm{U - \Uref}\, \| X(\phi') \|_2 \le b \sqrt m$, so the single point $\Uref X(\phi')$ covers $\UQ X(\phi')$ at radius $b \sqrt m$. The unitary factor in \eqref{eq:split} is therefore zero at all radii $\epsilon \gamma / (8E) \ge b \sqrt m$, that is for all $\epsilon \ge D = 8\, b E \sqrt m / \gamma$. The additive term hence contributes to Dudley's entropy integral at most
\begin{equation*}
\frac{12}{m} \int_0^{D} \sqrt{\ln 4N^2}\, d\beta
= \frac{96\, bE}{\gamma} \sqrt{\frac{\ln 4N^2}{m}}
\in O\left( \frac{bE}{\gamma} \sqrt{\frac{n}{m}} \right),
\end{equation*}
which is absorbed into the first term of Theorem~\ref{thm:main}. The remainder of the proof proceeds unchanged, completing the proof for every $b>0$. When $b=0$, the post-encoding unitary class is a singleton; applying Dudley's integral directly to the encoder cover \eqref{eq:enc-term} gives the simpler encoder-only scaling $\Otil(\sqrt{L/m})$. \hfill$\qed$

\subsection{Proof of Corollary~\ref{cor:b0}}
\label{app:corproof}
With $\UQ = \{I\}$, the reduction \eqref{eq:lip-reduction} reads $\ln\cN((\cF_\gamma)_{|S}, \epsilon, \|\cdot\|_2) \le \ln\cN(\{X(\phi) : \phi \in [-\pi,\pi]^L\}, \epsilon\gamma/4E, \|\cdot\|_2)$, with no unitary factor. By Lemma~\ref{lem:enc-lip} an $r$-cover of the cube with $r = \epsilon\gamma/(4EH\sqrt m)$ induces an $\epsilon\gamma/4E$-cover of $\{X(\phi)\}$, so Lemma~\ref{lem:cube} gives
\begin{equation*}
\ln\cN\big((\cF_\gamma)_{|S}, \epsilon, \|\cdot\|_2\big) \le L \ln\Big(1 + \frac{K_0}{\epsilon}\Big), \qquad K_0 = \frac{4\pi E H \sqrt{mL}}{\gamma}.
\end{equation*}
Since $\beta \mapsto \sqrt{\ln(1 + K_0/\beta)}$ is integrable at $0$, Dudley's integral may be taken with lower limit $\alpha \to 0$, which removes the $4\alpha/\sqrt m$ term. Applying it to $(\cF_\gamma)_{|S} \cup \{0\}$ and substituting $\beta = \sqrt m\, u$, with $K_0' = K_0/\sqrt m = 4\pi E H \sqrt L/\gamma$ and the elementary bound $\ln(1 + K_0'/u) \le \ln(1 + K_0') + \ln(1/u)$ for $0 < u \le 1$,
\begin{align*}
\Rad\big((\cF_\gamma)_{|S}\big)
&\le \frac{12}{\sqrt m} \int_0^1 \Big( \sqrt{L \ln(1 + K_0'/u)} + \sqrt{\ln 2} \Big)\, du \\
&\le 12 \sqrt{\frac{L}{m}} \Big( \sqrt{\ln(1 + K_0')} + \frac{\sqrt\pi}{2} \Big) + \frac{12\sqrt{\ln 2}}{\sqrt m},
\end{align*}
using $\int_0^1 \sqrt{\ln(1/u)}\, du = \sqrt\pi/2 \le 0.89$. Substituting into the margin bound \eqref{eq:margin-lemma} gives, with probability at least $1-\delta$,
\begin{equation}
\label{eq:corexplicit}
R(h) \le \widehat R_\gamma(h) + 24 \sqrt{\frac{L}{m}} \left( \sqrt{\ln\!\Big(1 + \frac{4\pi E H \sqrt L}{\gamma}\Big)} + 0.89 \right) + \frac{20}{\sqrt m} + 3\sqrt{\frac{\ln(2/\delta)}{2m}}.
\end{equation}
When $E H \sqrt L \ge 2\gamma$ the logarithm is at least $\ln(1 + 8\pi) > 3$, so the bracket is at most a constant multiple of $\sqrt{\ln(1 + K_0')}$ and $\ln(1 + 4\pi x) \le 5\ln x$ for $x \ge 2$; this yields Corollary~\ref{cor:b0} with absolute constants. If every trainable gate is a Pauli rotation $\exp(-i\theta\sigma/2)$, then $\snorm{H_j} = 1/2$, $H = \sqrt L/2$, and $K_0' = 2\pi E L/\gamma$. If, in addition, the measurement is projective with $k$ outcomes, each $E_i$ is an orthogonal projector, so $\snorm{E_i} \le 1$, $E \le \sqrt k$, and $K_0' \le 2\pi\sqrt k\, L/\gamma$. We may take $\gamma < 1$, since margins never exceed $1$ and $\widehat R_\gamma(h) = 1$ otherwise; then $L/\gamma \ge 2$ for $L \ge 2$, and $\ln(1 + K_0') \le \ln(1 + 2\pi\sqrt k) + \ln(L/\gamma) \le c_k \ln(L/\gamma)$ with $c_k = 1 + \ln(1+2\pi\sqrt k)/\ln 2$ ($c_2 < 4.4$ for binary classification). The same lower bound $\ln(L/\gamma) \ge \ln 2$ absorbs the additive constant $0.89$ in \eqref{eq:corexplicit}, giving the stated $O(\sqrt{(L/m)\ln(L/\gamma)})$ form, with a constant depending only on $k$. \hfill$\qed$

\section{Training Diagnostics}
\label{app:training}

Figure~\ref{fig:train_diagnostics} summarizes the training results in all runs. Most runs achieve high training and test accuracy, with a population of poorly performing outliers concentrated among the smallest architectures. For two of the three datasets, the empirical generalization gap (training minus test accuracy) is distributed approximately symmetrically around zero rather than being predominantly positive. The gaps in this regime are small, so fluctuations from the finite test set can be comparable to the gaps themselves. The $\kappa$-weighting in Equation~\eqref{eq:se} discounts comparisons for which the ordering is not reliably determined.

\begin{figure}[h]
    \centering
    \includegraphics[width=0.8\columnwidth]{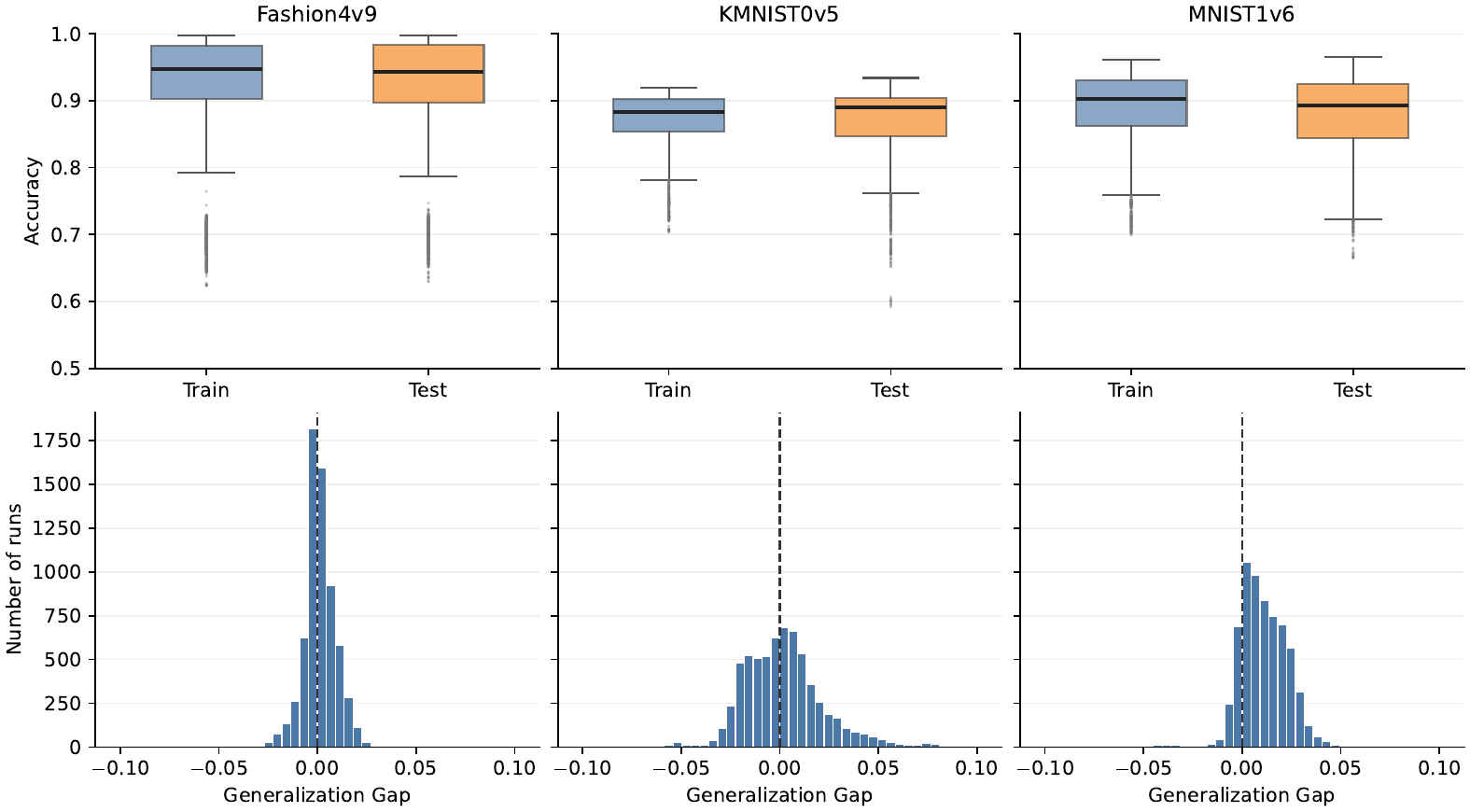}
    \caption{Accuracies and generalization gaps of all 19{,}440 runs. The training accuracy and test accuracies are represented as boxplots. The box extends from the first quartile to the third, with a line at the median. Error bars represent 1.5$\times$ the inter-quartile range added to the box boundaries. }
    \label{fig:train_diagnostics}
\end{figure}

\section{Additional Experimental Details}
\label{app:details}

\paragraph{Evaluation details.}
For the conditional independence test defined in Equation~\eqref{eq:cit}, the empirical joint distribution over runs is used to estimate the conditional mutual information and conditional entropy, following \citet{hur2025margins}. For the sign-error defined in Equation~\eqref{eq:se}, the function $\chi(\cdot, m_{\mathrm{te}})$ is the sign-consistency probability bound of \citet{dziugaite2020search} evaluated with the test-set size $m_{\mathrm{te}} = 1000$. Pairs are formed only between settings differing in exactly one hyperparameter, seeds are aggregated within each setting, and environments with effective sample size $n_{\mathrm{eff}} < 12$ are discarded. Environments where the complexity measure is constant about the change in the hyperparameter are also discarded. In Equation~\eqref{eq:cit}, $S$ ranges over all eight hyperparameter types of Section~\ref{sec:setup}, including the random seed, i.e.\ 36 conditioning sets (8 single types and 28 pairs); for every measure reported, the minimizing set does not involve the seed, so the 28 seed-free sets give the same $\cK$. Of the $30{,}732$ environments that contain at least one retained pair ($\kappa > 0$), $24{,}909$ satisfy $n_{\mathrm{eff}} \ge 12$ (median $n_{\mathrm{eff}} = 20.4$ among them, of at most 25). The margin-based measures are scored on these $24{,}909$ environments ($11{,}104$ Common), whereas $n_p$ and $n_p\sqrt{n_q}$ are scored on $10{,}443$ in both columns, because $n_p$ is also constant between the two basic-entangler ans\"atze, which removes $661$ further environments. Both criteria depend only on how a measure orders pairs of models, so strictly increasing transformations of a measure, such as the square root in Corollary~\ref{cor:b0} or in $\sqrt{n_p}$, leave all scores unchanged.

\paragraph{Margin computation.}
Margins are computed on the training set after convergence as $\cM(h(x), y) = h(x)_y - \max_{j \neq y} h(x)_j$, where $h(x)$ is the measurement distribution of qubit~0, and $m_m$ denotes the median of this distribution.

\paragraph{Software and hardware.}
Simulations were performed with the \texttt{default.qubit} statevector simulator in PennyLane 0.42.3. PyTorch 2.8.0 was the main machine learning framework used. Experiments were carried out on a desktop computer equipped with an AMD Ryzen 7 7800X3D 8-core processor and 32GBs of RAM.

\section{Composite Measure Optimization}
\label{app:composite}

We optimize $\mu(c) = \sqrt{n_q}/m_m + c\sqrt{n_p}$ to minimize the 90th-percentile sign-error by grid-searching $c$ over $[0.01, 0.1]$ with increments of $0.005$. The sign-errors formed a U-shape across this sweep as shown in Figure~\ref{fig:cscan}(a), achieving a minimum 90th-percentile sign-error of $0.7117$ at $c^*=0.075$.

Table~\ref{tab:cscanK} reports both criteria along a wider sweep of $c$: the 90th-percentile sign-error is U-shaped with its minimum at $c^*=0.075$, whereas $\cK$ decreases monotonically over the whole sweep, from its $c=0$ value $\cK(\sqrt{n_q}/m_m)$ to $6.6\times10^{-5}$ at $c=2$, below even the parameter-count ordering recovered as $c\to\infty$, so no choice of $c$ improves the conditional independence score. 

\begin{table}[h]
\centering
\tightertab
\begin{tabular}{lccccccccccccc}
\toprule
$c$ & 0 & 0.01 & 0.025 & 0.05 & $\mathbf{0.075}$ & 0.1 & 0.15 & 0.2 & 0.3 & 0.5 & 1 & 2 & $\infty$ \\
\midrule
$10^{3}\,\cK(\mu(c))$ & 3.546 & 3.518 & 3.485 & 3.407 & 3.316 & 3.269 & 3.139 & 3.006 & 2.731 & 2.284 & 1.130 & 0.066 & 0.194 \\
$\widehat{SE}_{90}$, All & 0.7144 & 0.7136 & 0.7134 & 0.7121 & \textbf{0.7117} & 0.7127 & 0.7129 & 0.7138 & 0.7154 & 0.7183 & 0.7233 & 0.7294 & 0.7724 \\
$\widehat{SE}_{90}$, Com. & 0.7274 & 0.7250 & 0.7247 & 0.7221 & \textbf{0.7212} & 0.7233 & 0.7237 & 0.7257 & 0.7310 & 0.7385 & 0.7486 & 0.7592 & 0.7724 \\
\bottomrule
\end{tabular}
\caption{Composite measure $\mu(c) = \sqrt{n_q}/m_m + c\sqrt{n_p}$ along a sweep of $c$: conditional independence score (higher is better) and 90th-percentile sign-error over all and over the Common environments (lower is better). $c=0$ is $\sqrt{n_q}/m_m$; the column $c=\infty$ is the ordering of $\sqrt{n_p}$, i.e., of $n_p$; its sign-errors are over the $10{,}443$ environments in which $n_p$ varies (finite $c$: all $24{,}909$). For reference, $\cK(1/m_m) = 4.570\times10^{-3}$ with sign-errors $0.7124$ and $0.7225$.}
\label{tab:cscanK}
\end{table}

We also optimize $\mu(c_{A, n_q}) = \sqrt{n_q}/m_m + c_{A,n_q}\sqrt{n_p}$, where each (qubit number, ansatz) pair has its own independent $c_{A, n_q}$. Each setting had its $c$ value grid-searched across $[0, 0.2]$ with increments of $0.005$. The 90th-percentile sign-errors for each setting pair for this sweep is shown in Figure~\ref{fig:cscan}(b). Using the optimum $c_{A, n_q}$ values, the aggregate sign-error 90th-percentile is $0.6988$. This aggregate runs over the environments in which $c_{A,n_q}$ is constant, i.e., those varying embedding, re-uploadings, batch size, learning rate, or dataset ($17{,}199$ environments), not over the seven axes of Tables~\ref{tab:cit} and \ref{tab:derived}; on the same five axes $1/m_m$ scores $0.7052$ and the single-$c$ composite $0.7012$, while over all seven axes $\mu(c_{A,n_q})$ scores $0.7108$ versus $0.7124$ for $1/m_m$ (paired bootstrap difference $-0.0017$, $95\%$ interval $[-0.0036, 0.0001]$). The gain is confined to the re-uploading axis ($0.680$ versus $0.709$) and is reversed on the qubit ($0.705$ versus $0.685$) and ansatz ($0.771$ versus $0.768$) axes, on which the weights are not fit.

\paragraph{Held-out evaluation of the fitted weights.}
The twelve weights $\{c_{A,n_q}\}$ are fit by minimizing the 90th-percentile sign-error on the same environments on which it is then reported, so the $0.6988$ figure may be optimistic. A seed-level split (fit on seeds $\{16,13\}$, evaluate on $\{42,65,89\}$) is not possible here, because the sign-error requires an effective sample size $n_{\mathrm{eff}}\ge 12$, which needs the full $5\times5$ cross-product of random seeds within each environment; a two- or three-seed subset leaves no scored environments. We therefore split the \emph{environments} in half (preserving all five seeds), fit $\{c_{A,n_q}\}$ on one half, and evaluate the aggregate 90th-percentile sign-error on the other, in both directions. The held-out values are $0.695$ and $0.705$ (versus $0.6988$ in-sample); on the same halves $1/m_m$ scores $0.700$ and $0.709$ and the single-$c$ composite $0.697$ and $0.705$, so the fitted weights keep an advantage of about $0.004$--$0.005$ over $1/m_m$ on these five axes but essentially tie the single-$c$ composite. The single global $c^*$ shows no comparable held-out advantage over $1/m_m$ (held-out composite $0.709$ versus $1/m_m$ $0.710$, and a tie to four decimals ($0.7148$ for both) in the reverse split).

\begin{figure}[h]
    \centering
    \includegraphics[width=0.95\columnwidth]{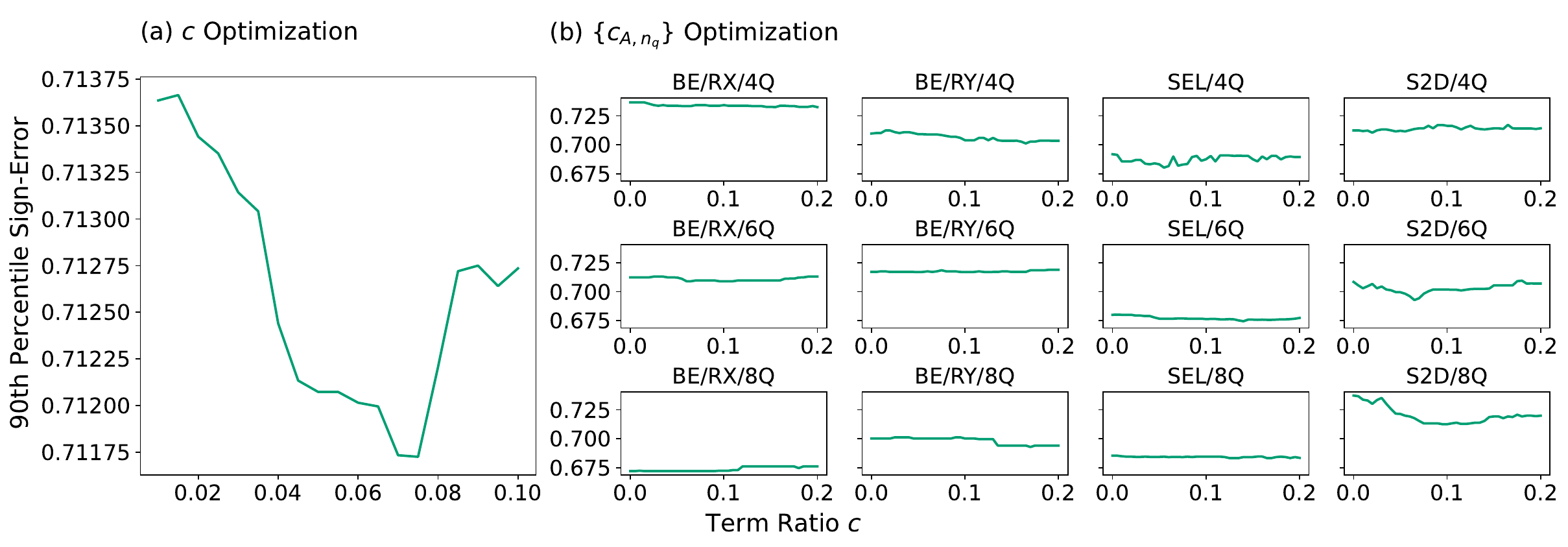}
    \caption{90th-percentile of the empirical sign-error distribution for composite complexity measures $\mu(c) = \sqrt{n_q}/m_m + c\sqrt{n_p}$. (a) One $c$ is used for all experiments. (b) Each (ansatz $A$, number of qubits $n_q$) pair has its own $c_{A,n_q}$. BE: basic entangler; S2D: simplified two-design; SEL: strongly entangling layers. ((a) and (b) have different scales)}
    \label{fig:cscan}
\end{figure}

\section{Margin Statistic Ablation}
\label{app:quantiles}
Table~\ref{tab:quantiles} recomputes the two criteria for all measures using the 10th-percentile and lower-quartile (25th) training margin in place of $m_m$. The $p$-th percentile is exactly zero---and the inverse margin infinite---whenever more than $p\%$ of the training points are misclassified: this occurs for $47.7\%$ ($9{,}279/19{,}440$) of runs at the 10th percentile, $5.9\%$ ($1{,}149$) at the lower quartile, and $0\%$ at the median. The $\cK$ ranking is percentile-independent at both ends: the inverse margin has the highest conditional independence score at every statistic, no measure containing $n_p$ beats it, and the parameter counts sit at the floor. On the sign-error, $n_p/m_m$ ($0.7071$) and the composite ($0.7079$) fall just below $1/m_m$ ($0.7082$) at the 10th percentile, a spread of $0.0011$. Lower quantiles give stronger scores for every margin-based measure---moving from the median to the lower quartile roughly doubles $\cK(1/m_m)$ ($4.57\times10^{-3} \to 1.10\times10^{-2}$) and lowers its 90th-percentile sign-error ($0.7124 \to 0.7017$)---but the 10th-percentile columns rest on a degenerate footing (nearly half the runs at $+\infty$, a third of the sign-error environments lost), so the median is used in the main text.

\begin{table}[h]
\centering
\small
\begin{tabular}{lcccccc}
\toprule
measure & $\cK$ p50 & $\cK$ p25 & $\cK$ p10 & $\widehat{SE}$ p50 & $\widehat{SE}$ p25 & $\widehat{SE}$ p10 \\
\midrule
$1/m_m$ & $4.57$e$\text{-}3$ & $1.10$e$\text{-}2$ & $1.09$e$\text{-}2$ & 0.7124 & 0.7017 & 0.7082 \\
$\sqrt{n_q}/m_m$ & $3.55$e$\text{-}3$ & $9.00$e$\text{-}3$ & $1.05$e$\text{-}2$ & 0.7144 & 0.7044 & 0.7092 \\
$n_p/m_m$ & $4.69$e$\text{-}4$ & $5.86$e$\text{-}3$ & $9.79$e$\text{-}3$ & 0.7256 & 0.7118 & 0.7071 \\
$n_p\sqrt{n_q}$ & $1.57$e$\text{-}4$ & $1.57$e$\text{-}4$ & $1.57$e$\text{-}4$ & 0.7724 & 0.7724 & 0.7724 \\
$n_p$ & $1.94$e$\text{-}4$ & $1.94$e$\text{-}4$ & $1.94$e$\text{-}4$ & 0.7724 & 0.7724 & 0.7724 \\
$n_p\ln(n_p/m_m)$ & $3.66$e$\text{-}4$ & $3.38$e$\text{-}3$ & $5.76$e$\text{-}3$ & 0.7321 & 0.7267 & 0.7237 \\
composite ($c{=}0.075$) & $3.32$e$\text{-}3$ & $8.89$e$\text{-}3$ & $1.06$e$\text{-}2$ & 0.7117 & 0.7028 & 0.7079 \\
\bottomrule
\end{tabular}
\caption{Margin-statistic ablation: $\cK$ (higher is better) and 90\% $\widehat{SE}$ over all environments (lower is better) with the margin taken at the 50th, 25th, and 10th percentile. Runs with $m_m{=}0$: $0$ (p50), $1{,}149$ (p25), $9{,}279$ (p10). The parameter-only measures $n_p$, $n_p\sqrt{n_q}$ do not use the margin, so their scores are constant across statistics.}
\label{tab:quantiles}
\end{table}

\section{Empirical Margin Loss}
\label{app:rgamma}
Table~\ref{tab:rgamma} evaluates the empirical margin loss $\widehat R_\gamma(h) = m^{-1}\sum_i \ind(\cM(h(x_i),y_i) \le \gamma)$ as a complexity measure at three fixed margin levels, with $1/m_m$ for reference. Because negative sample margins are recorded as zero, the count $\#\{\cM \le \gamma\}$ is unchanged by that convention for every $\gamma>0$, so $\widehat R_\gamma$ is the exact empirical margin loss and is finite for every run. It exceeds the inverse median margin on both criteria at all three levels (paired bootstrap of the 90th-percentile difference at $\gamma=0.1$: $-0.0069$, $95\%$ interval $[-0.0111, -0.0034]$; Appendix~\ref{app:ci}); brackets are $95\%$ bootstrap intervals over resampled environments (an independent bootstrap draw from that of Table~\ref{tab:seci}, so the $1/m_m$ interval differs in the last digit).

\begin{table}[h]
\centering
\small
\begin{tabular}{lccc}
\toprule
measure & $\cK(\mu)$ & 90\% $\widehat{SE}$ All [95\% CI] & 90\% $\widehat{SE}$ Common \\
\midrule
$\widehat R_{\gamma=0.05}$ & $1.56\times10^{-2}$ & 0.7059 [0.7011, 0.7100] & 0.7082 \\
$\widehat R_{\gamma=0.1}$  & $1.34\times10^{-2}$ & 0.7056 [0.7011, 0.7090] & 0.7113 \\
$\widehat R_{\gamma=0.2}$  & $9.63\times10^{-3}$ & 0.7082 [0.7042, 0.7122] & 0.7174 \\
$1/m_m$ & $4.57\times10^{-3}$ & 0.7124 [0.7080, 0.7166] & 0.7225 \\
\bottomrule
\end{tabular}
\caption{Empirical margin loss $\widehat R_\gamma$ as a predictor. Higher $\cK$ and lower $\widehat{SE}$ are better.}
\label{tab:rgamma}
\end{table}

\section{Bootstrap Confidence Intervals}
\label{app:ci}
Table~\ref{tab:seci} reports $95\%$ bootstrap intervals for the 90th-percentile sign-error, obtained by resampling the sign-error environments with replacement ($2{,}000$ resamples). The intervals separate the pure-margin measures ($1/m_m$, $\sqrt{n_q}/m_m$, and the composite; upper limit $\le 0.72$) from the parameter-count measures ($n_p$, $n_p\sqrt{n_q}$; lower limit $\ge 0.76$), with $n_p/m_m$ and $n_p\ln(n_p/m_m)$ falling in between, while the differences among the pure-margin measures---including the $0.0007$ edge of the composite over $1/m_m$---lie within the intervals. Because the measures are scored on the same environments, the sharper test is a paired bootstrap of the difference in 90th-percentile sign-error on the shared environments ($24{,}909$; $24{,}897$ for $\widehat R_{\gamma=0.1}$; $2{,}000$ resamples): composite minus $1/m_m$ is $-0.0007$ with $95\%$ interval $[-0.0026, 0.0012]$, and $\widehat R_{\gamma=0.1}$ minus $1/m_m$ is $-0.0069$ $[-0.0111, -0.0034]$ ($-0.0113$ $[-0.0175, -0.0072]$ on the Common environments). Resampling the (varied hyperparameter, base configuration) clusters instead of individual environments leaves these intervals essentially unchanged (composite $[-0.0028, 0.0012]$, $\widehat R_{\gamma=0.1}$ $[-0.0115, -0.0028]$). Confidence intervals for the conditional independence score $\cK$, obtained by resampling the five random seeds with replacement, likewise separate the margin-based from the parameter-count measures (Table~\ref{tab:kci}).


\begin{table}[h]
\centering
\small
\begin{tabular}{lcc}
\toprule
measure & 90\% $\widehat{SE}$ All [95\% CI] & 90\% $\widehat{SE}$ Common [95\% CI] \\
\midrule
$1/m_m$ & 0.7124 [0.7082, 0.7169] & 0.7225 [0.7178, 0.7295] \\
$\sqrt{n_q}/m_m$ & 0.7144 [0.7099, 0.7183] & 0.7274 [0.7216, 0.7342] \\
$n_p/m_m$ & 0.7256 [0.7209, 0.7300] & 0.7531 [0.7468, 0.7594] \\
$n_p$ & 0.7724 [0.7637, 0.7809] & 0.7724 [0.7637, 0.7809] \\
$n_p\sqrt{n_q}$ & 0.7724 [0.7638, 0.7810] & 0.7724 [0.7634, 0.7810] \\
composite ($c{=}0.075$) & 0.7117 [0.7075, 0.7159] & 0.7212 [0.7156, 0.7275] \\
$n_p\ln(n_p/m_m)$ & 0.7321 [0.7275, 0.7367] & 0.7699 [0.7609, 0.7778] \\
\bottomrule
\end{tabular}
\caption{Bootstrap $95\%$ confidence intervals for the 90th-percentile sign-error, resampling environments.}
\label{tab:seci}
\end{table}

\begin{table}[h]
\centering
\small
\begin{tabular}{lc}
\toprule
measure & $\cK(\mu)$ [95\% CI over seeds] \\
\midrule
$1/m_m$ & $4.565\times10^{-3}$ $[3.73,5.18]{\times}10^{-3}$ \\
$\sqrt{n_q}/m_m$ & $3.543\times10^{-3}$ $[2.87,4.33]{\times}10^{-3}$ \\
$n_p/m_m$ & $4.683\times10^{-4}$ $[2.00,7.96]{\times}10^{-4}$ \\
$n_p\sqrt{n_q}$ & $1.564\times10^{-4}$ $[1.56,5.36]{\times}10^{-4}$ \\
$n_p$ & $1.932\times10^{-4}$ $[1.93,5.25]{\times}10^{-4}$ \\
composite ($c{=}0.075$) & $3.316\times10^{-3}$ $[2.69,4.24]{\times}10^{-3}$ \\
$\widehat R_{\gamma=0.1}$ & $1.288\times10^{-2}$ $[1.17,1.54]{\times}10^{-2}$ \\
\bottomrule
\end{tabular}
\caption{Conditional independence score $\cK$ with $95\%$ bootstrap confidence intervals, resampling the five random seeds with replacement ($40$ resamples). With only five seeds the intervals are coarse; the bootstrap medians preserve the ranking of the point estimates, the intervals cleanly separate the margin-based measures from the parameter-count measures, and place $\widehat R_{\gamma=0.1}$ above $1/m_m$. For $n_p$ and $n_p\sqrt{n_q}$ no resample falls below the full-grid value of the bootstrap run, which therefore sits at the lower limit.}
\label{tab:kci}
\end{table}

\clearpage
\section{Sign-Error Empirical Cumulative Distributions}
\label{app:se_ecdf}

\begin{figure}[h]
    \centering
    \includegraphics[width=0.5\columnwidth]{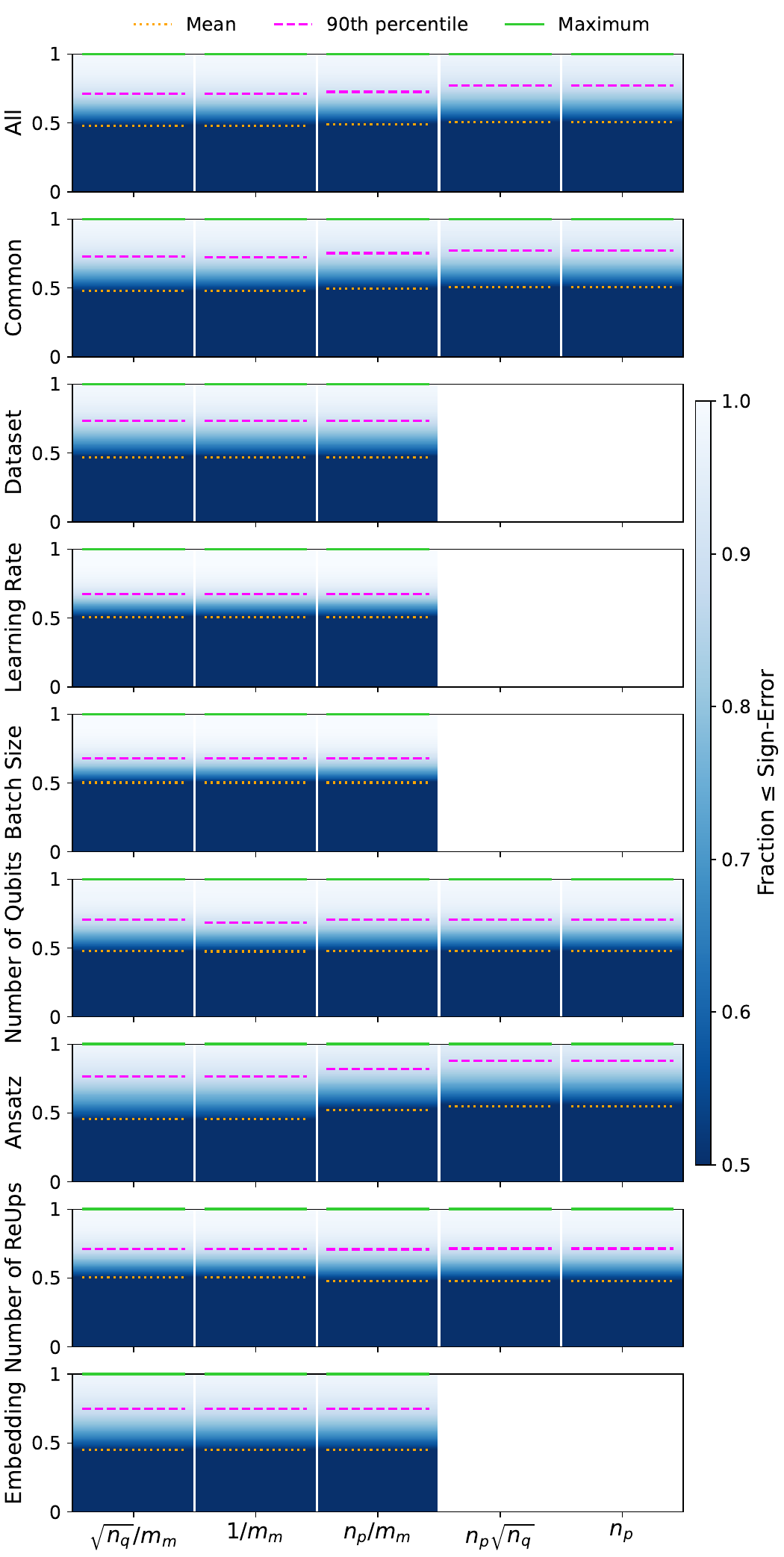}
    \caption{Empirical cumulative distributions of sign-error for the different complexity measures considered in this study. The mean, 90th-percentile, and maximum values are indicated with horizontal lines.}
    \label{fig:se_ecdf}
\end{figure}

\end{document}